\newcommand{\DD}{{\rm d}}
\newcommand{\I}{{\rm i}}
\newcommand{\dy}{{{\rm d}}y}
\newcommand{\dx}{{{\rm d}}x}
\newcommand{\Log}{{\rm Log}}
\newcommand{\C}{ \mathbb{C} }
\newcommand{\N}{ \mathbb{N} }
\newcommand{\pd}[2]{ \frac{\partial #1}{\partial #2} }
\newcommand{\R}{ \mathbb{R} }
\newcommand{\Var}{{\mathbb V}{\rm ar}\,}
\newcommand{\ra}{\rightarrow}
\newcommand{\prob}[1]{\mathbb{P}\left(#1\right)}
\newcommand{\E}{\mathbb{E}}
\newcommand{\da}{\downarrow}
\theoremstyle{plain}
\newtheorem{theorem}{Theorem}     
\newtheorem{lemma}{Lemma}
\theoremstyle{definition}
\newtheorem{remark}{Remark}
\begin{document}

\title{Convergence rates of Laplace-transform based estimators}
\author{Arnoud V. den Boer$^{1,2}$\footnote{Corresponding author. Email: {\tt {\scriptsize a.v.denboer@utwente.nl}}.}, \:\:Michel Mandjes$^{2,3,4}$ \\ 
\small$^1$ University of Twente, Drienerlolaan 5, 7522 NB Enschede \\
\small$^2$ Centrum Wiskunde \& Informatica, Science Park 123, 1098 XG Amsterdam\\
\small$^3$ University of Amsterdam, Science Park 904, 1098 XH Amsterdam\\
\small$^4$ EURANDOM, P.O. Box 513, 5600 MB Eindhoven
}

\maketitle

\begin{abstract} This paper considers the problem of estimating probabilities of the form $\mathbb{P}(Y \leq w)$, for a given value of $w$, in the situation that a sample of i.i.d.\ observations $X_1, \ldots, X_n$ of $X$ is available, and where we explicitly know a functional relation between the   Laplace transforms of the non-negative random variables $X$ and $Y$.
A plug-in estimator is constructed by calculating the Laplace transform of the empirical distribution of the sample $X_1, \ldots, X_n$, applying the functional relation to it, and then (if possible) inverting the resulting Laplace transform and evaluating it in $w$. We show, under mild regularity conditions, that the resulting estimator is weakly consistent and has expected absolute estimation error $O(n^{-1/2} \log(n+1))$. We illustrate our results by two examples: in the first we estimate the distribution of the workload in an M/G/1 queue from observations of the input in fixed time intervals, and in the second we identify the distribution of the increments when observing a compound Poisson process at equidistant points in time (usually referred to as `decompounding').
\end{abstract}

\section{Introduction} \label{sec:intro}
The estimation problem considered in this paper is the following. 
Suppose we have independent observations of the (nonnegative) random variable $X$, but we are interested in estimating the distribution of the (nonnegative) random variable $Y$. The crucial element in our set up is that we explicitly know the relation between the Laplace transforms of the random variables $X$ and $Y$, i.e., we have a mapping $\Psi$ which maps Laplace transforms of random variables to complex-valued functions defined on the right-half complex plane, and which maps the Laplace transform of $X$ to the Laplace transform of $Y$. 

A straightforward estimation procedure could be the following. (i)~Estimate the Laplace transform of $X$ by its evident empirical estimator; denote this estimate by $\tilde{X}_n$; (ii)~estimate the Laplace transform of $Y$ by $\Psi \tilde{X}_n$; 
 (iii)~apply Laplace inversion on $\Psi \tilde{X}_n$, so as to obtain an estimate of the distribution of $Y$. To justify this procedure, there are several issues that need to be addressed. First, $\tilde{X}_n$ may not lie in the domain of the mapping $\Psi$, and second, $\Psi \tilde{X}_n$ may not be a Laplace transform, and thus not amenable for Laplace inversion.

The main contribution of this paper is that we specify a procedure in which the above caveats are addressed, leading to the result that the plug-in estimator described above converges, in probability, to the true value as $n$ grows large. In addition we have bounds on its performance: the
expected absolute estimation error is $O(n^{-1/2} \log(n+1))$. 
Perhaps surprisingly, the techniques used primarily rely on an appropriate combination of standard proof techniques. Our result is valid under three mild regularity conditions: two of them are essentially of a technical nature, whereas the third can be seen as a specific continuity property that needs to be imposed on the mapping $\Psi$.

In this paper, two specific examples are treated in greater detail. In the first, an M/G/1 queueing system is considered: jobs of random size arrive according to a Poisson process with rate $\lambda>0$, the job sizes are i.i.d.\ samples from a nonnegative random variable $B$, and the system is drained at unit rate. Suppose that we observe the amount of work arriving in intervals of fixed length, say $\delta>0$; these observations are compound Poisson random variables, distributed as
\[X\stackrel{\rm d}{=} \sum_{i=1}^N B_i,\] with $N$ Poisson distributed with mean $\lambda\delta$, independent of the job sizes, and with $B_1, B_2, \ldots$ mutually independent and distributed as $B$. We show how our procedure can be used to estimate the  distribution of the workload $Y$ from the compound Poisson observations; the function $\Psi$ follows from the Pollaczek-Khinchine formula. As we demonstrate, the regularity conditions mentioned above are met.
In the second example, often referred to as `decompounding', the goal is to determine the job size distribution from compound Poisson observations. 

\vspace{2mm}

{\it Literature.} Related work can be found in various branches of the literature. Without aiming at giving an exhaustive overview, we discuss some of the relevant papers here.
The first branch consists of papers on estimating the probability distribution of a non-observed random variable by exploiting a given functional relation between the Laplace transforms of $X$ and $Y$.
The main difficulty that these papers circumvent is the issue of `ill-posedness': a sequence of functions $(f_n)_{n \in \N}$ may not converge to a function $f$, as $n$ grows large, even if the corresponding Laplace transforms of $f_n$ do converge to the Laplace transform of $f$.  
Remedies, based on `regularized Laplace inversion' have been proposed, in a compound Poisson context, by Shimizu \cite{Shimizu2010} (including Gaussian terms as well) and Mnatsakanov {\it et al.} \cite{MnatsakanovRuymgaartRuymgaart2008}; the rate of convergence is typically just $1/\log n$ in an appropriately chosen $L_2$-norm. Hansen and Pitts \cite{HansenPitts2006} use the Pollaczek-Khinthcine formula to construct estimators for the service-time distribution and its stationary excess distribution in an $M/G/1$ queue, and show that the estimated stationary excess distribution is asymptotically Normal. 

Some related papers that use Fourier instead of Laplace inversion are \cite{VanEsEtAl2007}, \cite{ComteDuvalGenonCatalot2014}, \cite{ComteDuvalGenonCatalotKappus2014} and \cite{HallPark2004}. Van Es {\it et al.} \cite{VanEsEtAl2007} estimate the density of $B_i$ by inverting the empirical Fourier transform associated with a sample of $X$, and prove that this estimator is weakly consistent and asymptotically normal.
Comte {\it et al.} \cite{ComteDuvalGenonCatalot2014} also estimate the density of $B_i$ using the empirical Fourier transform of $X$, by exploiting an explicit relation derived by Duval \cite{Duval2013} between the density of $X$ and $B_i$. They show that this estimator achieves the optimal convergence rate in the minimax sense over Sobolev balls. Comte {\it et al.} \cite{ComteDuvalGenonCatalotKappus2014} extend this to the case of mixed compound Poisson distributions (where the intenstiy $\lambda$ of the Poisson process is itself a random variable), and provide bounds on the $L^2$-norm of the density estimator. 
Finally, Hall and Park \cite{HallPark2004} estimate service-time characteristics from busy period data in an infinite-server queueing setting, and prove convergence rates in probability.

A second branch of research concerns methods that do not involve working with transforms and inversion. Buchmann and Gr\"ubel \cite{BuchmannGruebel2003} develop a method for decompounding: in the case the underlying random variables have a discrete distribution by relying on the so-called Panjer recursion, and in the case of continuous random variables by expressing the distribution function of the summands $B_i$ in terms of a series of alternating terms involving convolutions of the distribution of $X$. The main result of this paper concerns the asymptotic Normality of specific plug-in estimators. This method having the inherent difficulty that probabilities are not necessarily estimated by positive numbers, an advanced version (for the discrete case only) has been proposed by the same authors in \cite{BuchmannGruebel2004}. This method has been further extended by B{\o}gsted and Pitts \cite{BogstedPitts2010} to that of a general (but known) distribution for the number of terms $N$.
Duval \cite{Duval2013} estimates the probability density of $B_i$ by exploiting an explicit relation between the densities of $X$ and $B_i$, which however is only valid if $\lambda \delta < \log 2$. She shows that minimax optimal convergence rates are achieved in an asymptotic regime where the sampling rate $\delta$ converges to zero.
The introduction of \cite{BogstedPitts2010} gives a compact description of the state-of-the-art of this branch of the literature.

A third body of work concentrates on the specific domain of queueing models, and develops techniques to efficiently estimate large deviation probabilities. Bearing in mind that estimating small tail probabilities directly from the observations may be inherently slow and inefficient \cite{GlynnTorres1996}, techniques have been developed that exploit some structural understanding of the system. Assuming exponential decay in the exceedance level, the pioneering work of Courcoubetis {\it et al.} \cite{CourcoubetisEtAl1995}  provide (experimental) backing for an extrapolation technique. The approach proposed by Zeevi and Glynn  \cite{ZeeviGlynn2004} has provable convergence properties; importantly, their results are valid in great generality, in that they cover e.g.\ exponentially decaying as well as Pareto-type  tail probabilities. Mandjes and van de Meent  \cite{MandjesVandeMeent2009} consider queues with Gaussian input; it is shown how to accurately estimate the characteristics of the input stream by just measuring the buffer occupancy; interestingly, and perhaps counter-intuitively, relatively crude periodic measurements are sufficient to estimate fine time-scale traffic characteristics. 

As it is increasingly recognized that probing techniques may play a pivotal role when designing distributed control algorithms, there is a substantial number of research papers focusing on applications in communication networks. A few examples are the procedure of Baccelli {\it et al.}\ \cite{BaccelliKauffmannVeitch2009} that infers input characteristics from delay measurements, and  the technique of Antunes and Pipiras \cite{AntunesPipiras2011} that estimates the interrenewal distribution based on probing information.
This paper contributes to this line of research by showing how a Laplace-transform based estimator, using samples of the workload obtained by probing, can be used to estimate the workload in an M/G/1 queue; cf.\ Section \ref{subsec:mg1} and \ref{sec:num}.

\vspace{2mm}

{\it Organization.}
The rest of this paper is organized as follows. In Section \ref{sec:method} we formally define our Laplace-transform based estimator, and Section \ref{sec:rates} shows that the expected absolute estimation error is $O(n^{-1/2} \log(n+1))$. In Section \ref{subsec:mg1} we apply this result to an estimation problem in queueing theory, and in Section \ref{subsec:decompounding} to a decompounding problem. Section \ref{sec:aux}  contains a number of auxiliary lemmas used to prove the main theorems in this paper. A numerical illustration is provided in Section \ref{sec:num}.

\vspace{2mm}

{\it Notation.} We finish this introduction by introducing  notation that is used throughout this  paper.
The real and imaginary part of a complex number $z \in \C$ are denoted by $\Re(z)$ and $\Im(z)$; we use the symbol $\I$ for the imaginary unit. We write $\C_+ := \{ z \in \C \mid \Re(z) \geq 0 \}$ and $\C_{++} := \{ z \in \C \mid \Re(z) > 0 \}$. For a function $f: [0, \infty) \ra \R$, let $\bar{f}(s) = \int_0^{\infty} f(x) e^{-s x} \dx$ denote the Laplace transform of $f$, defined for all $s \in \C$ where the integral is well-defined.
For any nonnegative random variable $X$, let $\tilde{X}(s) := \E[\exp(-s X)]$ denote the Laplace transform of $X$, defined for all $s \in \C_+$. (Although $\tilde{X}(s)$ may be well-defined for $s$ with $\Re(s) < 0$, we restrict ourselves without loss of generality to $\C_{+}$, which is contained in the domain of $\tilde{X}(s)$ for each nonnegative random variable $X$.) For $t \in (0, \infty)$, as usual, $\Gamma(t) := \int_0^{\infty} x^{t-1} e^{-x} {\rm d}x$ denotes the Gamma function. The complement of an event $A$ is written as $A^{\mathfrak{c}}$; the indicator function corresponding to $A$ is given by ${\bf 1}_A$.  

\section{Laplace-transform based estimator} \label{sec:method}
In this section we formally define our plug-in estimator. The setting is as sketched in the introduction: we have $n$ i.i.d.\ observations $X_1, \ldots, X_n$ of the random variable $X$ at our disposal, and we wish to estimate the distribution of $Y$, where we know a functional relation between the transforms of $X$ and $Y$.

Let $\mathcal{X}$ be a collection of (single-dimensional) nonnegative random variables, and let the collection
$\tilde{\mathcal{X}} = \{ \tilde{X}(\cdot) \mid X \in \mathcal{X} \}$ represent their Laplace transforms.
Let \[ \Psi: \tilde{\mathcal{X}} \ra \{ g: \C_{+} \ra \C  \} \]
map each Laplace transform in $\tilde{\mathcal{X}}$ to a complex-valued function on $\C_{+}$.
Finally, let $Y$ be a nonnegative random variable such that $\tilde{Y}(s) = (\Psi \tilde{X})(s)$ for some unknown $X \in \mathcal{X}$ and all $s \in \C_{+}$, i.e., $\Psi$ maps the Laplace transform of $X$ onto the Laplace transform of $Y$. 
  We are interested in estimating the cumulative distribution function $F^Y(w)$ of $Y$ at a given value $w > 0$, based on the sample $X_1, \ldots, X_n$. 
  The distributions of both $X$ and $Y$ are assumed to be unknown, but the mapping $\Psi$ {\it is} known (and will be exploited in our estimation procedure).

A natural approach to this estimation problem is to (i) estimate the Laplace transform of $Y$ by
$\Psi \tilde{X}_n$, where $\tilde{X}_n$ is the `na\"{\i}ve' estimator
\[ \tilde{X}_n(s) = \frac{1}{n} \sum_{i=1}^n \exp(-s X_i), \quad (s \in \C_{+}); \]
observe that $\tilde{X}_n$ can be interpreted as the Laplace transform of the empirical distribution of the sample $X_1, \ldots, X_n$; then (ii) estimate the Laplace transform corresponding to the distribution function $F^Y$ by $s \mapsto s^{-1} (\Psi \tilde{X}_n)(s)$; and finally (iii) apply an inversion formula for Laplace transforms and evaluate the resulting expression in $w$. Note that in step (ii) we relied on the standard identity
\[\int_0^\infty e^{-sw} F^Y(w)\,{\rm d}w = \frac{\E[e^{-sY}]}{s}.\]

There are two caveats, however, with this approach: first, the transform $\tilde{X}_n$ is not necessarily an element of $\tilde{\mathcal{X}}$, in which case $\Psi \tilde{X}_n$ is undefined, and second, the function $s^{-1} (\Psi \tilde{X}_n)(s)$ is not necessarily a Laplace transform and thus not amenable for inversion.

To overcome the first issue, we let $E_n$ be the event that $\tilde{X}_n \in \tilde{\mathcal{X}}$. We assume that $E_n$ lies in the natural filtration generated by $X_1, \ldots, X_n$, is not defined in terms of characteristics of the (unknown) $X$, and also that $\prob{E_n^{\mathfrak{c}}} \ra 0$ as $n \ra \infty$. 
For the main result of this paper, Theorem \ref{thm:convergenceRate} in Section \ref{sec:rates}, it turns out to be irrelevant how $F^Y(w)$ is estimated on $E_n^{\mathfrak{c}}$ (as long as the estimate lies in $[0,1]$); we could, for example, estimate it by zero on this event. In concrete situations, it is typically easy to determine a suitable choice for the sets $E_n$; for both applications considered in Section \ref{sec:appl}, we explicitly identify the $E_n$.

On the event $E_n$, we estimate the Laplace transform of $F^Y$ by the plug-in estimator
\begin{align} \label{eq:emplap}
\bar{F}^Y_n(s) := \frac{1}{s} (\Psi \tilde{X}_n)(s), \quad (s \in \C_{++}).
\end{align}
To overcome the second issue, of $\bar{F}^Y_n(s)$ not necessarily being a Laplace transform, we estimate $F^Y(w)$ by applying a \emph{truncated} version of Bromwich' Inversion formula \citep{Doetsch1974}: 
\begin{align} \label{eq:estimator}
F^Y_n(w) = \int_{-\sqrt{n}}^{\sqrt{n}} \frac{1}{2 \pi} e^{(c + \I y) w} \bar{F}^Y_n(c + \I y) \dy,
\end{align}
where $c$ is an arbitrary positive real number. In the `untruncated' version of Bromwich' Inversion formula the integration in \eqref{eq:estimator} is over the whole real line. Since that integral may not be well-defined if $\bar{F}^Y_n$ is not a Laplace transform, we integrate over a finite interval (which grows in the sample size $n$).

The thus constructed estimator has remedied the two complications that we identified above. The main result of this paper, which describes the performance of this estimator as a function of the sample size $n$, is given in the next section.

\section{Main result: convergence rate} \label{sec:rates}
In this section we show that the expected absolute estimation error of our estimator $F^Y_n(w)$, as defined in the previous section, is bounded from above by
a constant times $n^{-1/2} \log(n+1)$. 

Our result is proven under the following assumptions. 

\begin{itemize}
\item[(A1)]
For each $n \in \N$ there is an event $A_n \subset E_n$, such that
$\prob{A_n^{\mathfrak{c}}} \leq \kappa_1 n^{-1/2}$ for some $\kappa_1 > 0$ independent of $n$;
\item[(A2)] $F^Y(y)$ is continuously differentiable on $[0, \infty)$, and twice differentiable at $y=w$;
\item[(A3)] There are constants $\kappa_2 \geq 0$, $\kappa_3 \geq 0$ and (nonnegative and random) $Z_n$, $n \in \N$,
such that $\sup_{p \in (1,2)} \E[|Z_n|^p] \leq \kappa_3 n^{-1/2}$  for all $n \in \N$, and such that, on the event $A_n$,
\[ | (\Psi \tilde{X}_n)(s) - (\Psi \tilde{X})(s) | \,\leq \,\kappa_2 \Big|\tilde{X}_n(s) - \tilde{X}(s) \Big| +Z_n \:\:\text{ a.s.},\]
for all $s  = c + \I y$, $n \in \N$, and $-\sqrt{n} \leq y \leq \sqrt{n}$.
\end{itemize}
These assumptions are typically `mild'; we proceed with a short discussion of each of them.

Assumption (A1) ensures that the contribution of the complement of $A_n$ (and also that of the complement of $E_n$) to the expected absolute estimation error is sufficiently small.
The difference between $A_n$ and $E_n$ is that the definition of $E_n$ does not involve the unknown $X \in \mathcal{X}$ (which enables us to define the estimator $F_n^Y(w)$ \emph{without} knowing the unknown $X$), whereas $A_n$ may actually depend on $X$. It is noted that in specific applications, this helps when checking whether the assumptions (A1)--(A3) are satisfied; cf.\ the proofs of Theorems~\ref{thm:mg1rates} and \ref{thm:decompoundingrates}. If $\prob{A_n^{\mathfrak{c}}} \sim n^{-a}$ for some $a \in (-1/2,0)$ then (A1) does not hold and Theorem \ref{thm:convergenceRate} is not valid.

Assumption (A2) is a smoothness condition on $F^Y$ that we use to control the error caused by integrating in
\eqref{eq:estimator} over a finite interval, instead of integrating over $\R$. The twice-differentiability assumption is only
used to apply Lemma \ref{lem:ntaillemma} with $f = F^Y$ in the proof of Theorem \ref{thm:convergenceRate}. It can be replaced by any other condition that guarantees that, for all $n \in \N$ and some $\kappa_4 > 0$ independent of $n$,
\[ \left| \int_{|y|>\sqrt{n}} \frac{1}{2 \pi} e^{(c+\I y) w} \bar{F}^Y(c+\I y) \DD y
\right|  \leq \frac{\kappa_4}{\sqrt{n}}. \]
Continuous differentiability of $F^Y$ makes sure that
Bromwich' Inversion formula applied to $\bar{F}^Y$ yields $F^Y$ again, i.e.,
$F^Y(w) = \int_{-\infty}^{\infty} ({2 \pi} )^{-1}e^{(c + \I y) w} \bar{F}^Y(c + \I y) \dy$, cf.\ \cite[][Chapter 4]{Schiff1999}.
This equality is still true if the derivative of $F^Y(y)$ with respect to $y$ is piecewise continuous with finitely many discontinuity points, and in addition continuous at $y=w$.

Assumption (A3) can be seen as a kind of Lipschitz-continuity condition on $\Psi$ that guarantees that $\Psi \tilde{X}_n$ is `close to' $\Psi \tilde{X}$ if $\tilde{X}_n$ is `close to' $\tilde{X}$. This condition is necessary to prove the weak consistency of our estimator. The formulation with the random variables $Z_n$ allows for a more general setting than with $Z_n = 0$, and is used in both applications in Section \ref{sec:appl}.

A straightforward example that satisfies assumptions (A1)--(A3) is the case where $X \stackrel{d}{=} Y + W$, where $W$ is a known nonnegative random variable. If the cdf of $Y$ satisfies the smoothness condition (A2), then, with $\tilde{Y}(s) = (\Psi \tilde{X})(s) := \tilde{X}(s) / \tilde{W}(s)$, $A_n^{\mathfrak{c}} = E_n^{\mathfrak{c}} = \emptyset$, $Z_n = 0$ a.s., $c > 0$ arbitrary, and 
$\kappa_2 := \sup_{s = c + \textrm{i}y, - \sqrt{n} \leq y \leq \sqrt{n}} 1 / |\tilde{Z}(s)|$, it is easily seen that assumptions (A1)--(A3) are satisfied. More involved examples that satisfy the assumptions are presented in Section \ref{sec:appl}.

\vspace{\baselineskip}

\begin{theorem}\label{thm:convergenceRate}
Let $w > 0$, $c > 0$, and assume (A1)--(A3).
Then $F^Y_n(w)$ converges to $F^Y(w)$ in probability, as $n \ra \infty$, and there is a constant $C > 0$ such that, for all $n \in \N$,
\begin{align} \label{eq:convrates}
\E[|F^Y_n(w) - F^Y(w)|] \leq C n^{-1/2} \log(n+1).
\end{align}
\end{theorem}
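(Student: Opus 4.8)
\emph{Proof plan.} The idea is to compare $F^Y_n(w)$ with the exact Bromwich representation of $F^Y(w)$ and to split the difference into a stochastic error (from replacing $\tilde X$ by the empirical transform $\tilde X_n$) and a deterministic truncation error (from integrating over $[-\sqrt n,\sqrt n]$ rather than $\R$), controlling each separately on the ``good'' event $A_n$ and on $A_n^{\mathfrak{c}}$. Since $F^Y$ is continuously differentiable, Bromwich' inversion formula applied to $\bar F^Y(s)=\tilde Y(s)/s=(\Psi\tilde X)(s)/s$ gives $F^Y(w)=\int_{\R}(2\pi)^{-1}e^{(c+\I y)w}\bar F^Y(c+\I y)\,\dy$; this is the role of (A2). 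On $A_n\subset E_n$ one may then write
\begin{align*}
F^Y_n(w)-F^Y(w)=\int_{-\sqrt n}^{\sqrt n}\frac{e^{(c+\I y)w}}{2\pi}\bigl(\bar F^Y_n-\bar F^Y\bigr)(c+\I y)\,\dy\;-\;\int_{|y|>\sqrt n}\frac{e^{(c+\I y)w}}{2\pi}\bar F^Y(c+\I y)\,\dy .
\end{align*}
The second (deterministic) term is $\BigO{n^{-1/2}}$ by Lemma~\ref{lem:ntaillemma} with $f=F^Y$. In the first term $\bigl(\bar F^Y_n-\bar F^Y\bigr)(s)=s^{-1}\bigl[(\Psi\tilde X_n)(s)-(\Psi\tilde X)(s)\bigr]$, so by (A3) its modulus is at most $\tfrac{e^{cw}}{2\pi}\bigl(\kappa_2\int_{-\sqrt n}^{\sqrt n}|c+\I y|^{-1}\,|\tilde X_n(c+\I y)-\tilde X(c+\I y)|\,\dy+Z_n\int_{-\sqrt n}^{\sqrt n}|c+\I y|^{-1}\,\dy\bigr)$ on $A_n$.

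Next one takes expectations. Writing $s=c+\I y$, the quantity $\tilde X_n(s)-\tilde X(s)=n^{-1}\sum_{i=1}^n(e^{-sX_i}-\E e^{-sX})$ is an average of i.i.d.\ centred complex variables of modulus at most $2$, so $\E|\tilde X_n(s)-\tilde X(s)|\le(\E|\tilde X_n(s)-\tilde X(s)|^2)^{1/2}=(n^{-1}\E|e^{-sX}-\E e^{-sX}|^2)^{1/2}\le(n^{-1}\E e^{-2cX})^{1/2}\le n^{-1/2}$, uniformly in $y$. By Tonelli the first integral above then has expectation at most $n^{-1/2}\int_{-\sqrt n}^{\sqrt n}|c+\I y|^{-1}\,\dy$, and $\int_{-\sqrt n}^{\sqrt n}(c^2+y^2)^{-1/2}\,\dy=2\,\mathrm{arcsinh}(\sqrt n/c)=\BigO{\log(n+1)}$ --- this is where the logarithmic factor in \eqref{eq:convrates} enters. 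For the $Z_n$-term, letting $p\downarrow1$ in $\E[|Z_n|^p]\le\kappa_3 n^{-1/2}$ and using Fatou's lemma gives $\E[Z_n]\le\kappa_3 n^{-1/2}$, so its contribution is again $\BigO{n^{-1/2}\log(n+1)}$; hence $\E[\,|F^Y_n(w)-F^Y(w)|\,{\bf 1}_{A_n}\,]=\BigO{n^{-1/2}\log(n+1)}$.

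On $A_n^{\mathfrak{c}}$ a crude bound suffices: $F^Y(w)\in[0,1]$; on $E_n^{\mathfrak{c}}$ the estimate lies in $[0,1]$ by construction; and on $E_n$ the transform $\Psi\tilde X_n$ is again a Laplace transform of a nonnegative random variable, hence $|\Psi\tilde X_n|\le1$ on $\C_+$, so $|F^Y_n(w)|\le\tfrac{e^{cw}}{2\pi}\int_{-\sqrt n}^{\sqrt n}|c+\I y|^{-1}\,\dy=\BigO{\log(n+1)}$ there. Since $E_n^{\mathfrak{c}}\subset A_n^{\mathfrak{c}}$ and $\prob{A_n^{\mathfrak{c}}}\le\kappa_1 n^{-1/2}$ by (A1), this yields $\E[\,|F^Y_n(w)-F^Y(w)|\,{\bf 1}_{A_n^{\mathfrak{c}}}\,]=\BigO{n^{-1/2}\log(n+1)}$. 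Adding the two contributions and absorbing the $\BigO{n^{-1/2}}$ terms into $\BigO{n^{-1/2}\log(n+1)}$ (uniformly in $n\ge1$) proves \eqref{eq:convrates}, and convergence in probability then follows by Markov's inequality. The only genuine subtlety I anticipate is the handling of $A_n^{\mathfrak{c}}$: one must use that $\Psi\tilde X_n$ stays bounded by $1$ on $E_n$, so that the truncated inversion integral grows only like $\log(n+1)$ and thereby outweighs the $\BigO{n^{-1/2}}$ mass of $A_n^{\mathfrak{c}}$; the remaining ingredients --- a second-moment bound for the empirical transform, the elementary integral of $|c+\I y|^{-1}$ over $[-\sqrt n,\sqrt n]$, and the $L^1$-control of $Z_n$ afforded by (A3) --- are routine.
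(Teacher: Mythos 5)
Your treatment of the main term is correct but follows a genuinely different route from the paper. On $A_n$ the paper bounds \eqref{eq:term1inproof} via H\"older with exponent $p\in(1,2)$, Jensen and Fubini, uses the $L^p$-bounds of Lemma~\ref{lem:sqrtlemma} together with $\int|c+\I y|^{-p}\,\DD y=C_0(p)<\infty$, and only then extracts the logarithm by optimizing $p=p_n=1+1/(2\log(n+1))$, balancing the blow-up of $C_0(p)\sim(p-1)^{-1}$ against $n^{1/2-1/p}$. You instead work directly in $L^1$: the pointwise first-moment bound $\E|\tilde X_n(s)-\tilde X(s)|\le n^{-1/2}$ (Cauchy--Schwarz on the i.i.d.\ average), Tonelli, and the elementary estimate $\int_{-\sqrt n}^{\sqrt n}(c^2+y^2)^{-1/2}\,\dy=O(\log(n+1))$, with Fatou's lemma turning the $\sup_{p\in(1,2)}\E[|Z_n|^p]\le\kappa_3 n^{-1/2}$ of (A3) into $\E[Z_n]\le\kappa_3 n^{-1/2}$. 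This is valid and arguably cleaner: the logarithm appears transparently from the divergence of $\int|c+\I y|^{-1}\,\dy$ rather than from a limiting H\"older exponent, and no optimization over $p$ is needed. The truncation term is handled exactly as in the paper via Lemma~\ref{lem:ntaillemma} (you should, as the paper does, spend one line verifying its integrability hypothesis $\int_w^\infty|{F^Y}'(y+w)|e^{-cy}y^{-1}\,\dy<\infty$ from (A2)).

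There is, however, a genuine gap in your Step on $A_n^{\mathfrak c}$. You assert that on $E_n$ the function $\Psi\tilde X_n$ is again the Laplace transform of a nonnegative random variable and hence $|\Psi\tilde X_n|\le1$ on $\C_+$. Nothing in the setup or in (A1)--(A3) gives this: $\Psi$ is only assumed to map $\tilde{\mathcal X}$ into complex-valued functions, and it is only at the true $\tilde X$ that its image is known to be a Laplace transform; indeed the paper's whole reason for using a truncated Bromwich integral is precisely that $\Psi\tilde X_n$ need not be a transform even when it is defined. Moreover, in the applications $E_n$ is merely an event making the defining formula of $\Psi\tilde X_n$ meaningful (e.g.\ $\{n^{-1}\sum_i X_i<\delta\}$ in the M/G/1 case), and off the event $A_n$ there is no control on, say, the denominator $s+\delta^{-1}\Log(\tilde X_n(s))$, so the claimed bound $|\Psi\tilde X_n|\le 1$ cannot be invoked there. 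The paper avoids this entirely: since $F^Y(w)\in[0,1]$ and the estimate is ($[0,1]$-valued by construction/convention), it bounds the contribution of $A_n^{\mathfrak c}$ simply by $\prob{A_n^{\mathfrak c}}\le\kappa_1 n^{-1/2}$, as in \eqref{eq:Acresult}. Replacing your Laplace-transform claim by this crude bound (or by explicitly clipping the estimator to $[0,1]$) repairs the argument; as written, that step does not follow from the stated assumptions.
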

\begin{proof}
It suffices to prove \eqref{eq:convrates}, since this implies weak consistency of $F^Y_n(w)$.
Fix $n \in \N$. 

The proof consists of three steps. In Step 1 we bound the estimation error on the event $A_n$, in Step 2 we consider the complement $A_n^{\mathfrak{c}}$, and in Step 3 we combine Step 1 and 2 to arrive at the statement of the theorem. Some of the intermediate steps in the proof rely on auxiliary results that are presented in Section \ref{sec:aux}.

{\bf Step 1.}
We show that there are positive constants $\kappa_4$ and $\kappa_5$, independent of $n$, such that, for all $n\in{\mathbb N}$ and $p \in (1,2)$,
\begin{equation}
\label{eq:ratesonA}
     \E[|F_n^Y(w) - F^Y(w)  | \cdot {\bf 1}_{A_n}] \leq  \kappa_4 n^{-1/2} + \kappa_5 (p-1)^{-1/p} n^{1/2 - 1/p}.
\end{equation}
To prove the inequality \eqref{eq:ratesonA}, consider the following
elementary upper bound:
\begin{eqnarray} \nonumber
\lefteqn{ \E\left[\left|F_n^Y(w) - F^Y(w)  \right| \cdot {\bf 1}_{A_n}\right]}\\
& =& \E\left[\left|
\int_{-\sqrt{n}}^{\sqrt{n}} \frac{1}{2 \pi} e^{(c+\I y) w} \bar{F}^{{Y}}_{{n}}(c+\I y) \dy -
\int_{-\infty}^{\infty} \frac{1}{2 \pi} e^{(c+\I y) w} \bar{F}^{Y}(c+\I y) \dy
\:\right| \cdot {\bf 1}_{A_n} \right] \nonumber \\ \nonumber
&=& \frac{1}{2\pi}\E\left[\left|
\int_{-\sqrt{n}}^{\sqrt{n}} e^{(c+\I y) w} (\bar{F}^{{Y}}_{{n}}(c+\I y) - \bar{F}^{Y}(c+\I y)) \dy
\right.\right.\\
&&\nonumber\hspace{15mm}\left.\left.- \int_{|y|>\sqrt{n}} e^{(c+\I y) w} \bar{F}^{Y}(c+\I y) \dy
\:\right|  \cdot {\bf 1}_{A_n} \right] \\ \label{eq:term1inproof}
&\leq& \E\left[\, \left|
\int_{-\sqrt{n}}^{\sqrt{n}} \frac{1}{2 \pi} e^{(c+\I y) w} (\bar{F}^{{Y}}_{{n}}(c+\I y) - \bar{F}^{Y}(c+\I y)) \dy\,
\right| \cdot {\bf 1}_{A_n} \right] \\ \label{eq:term2inproof}
&&\hspace{15mm}+\:
\left| \,\int_{|y|>\sqrt{n}} \frac{1}{2 \pi} e^{(c+\I y) w} \bar{F}^{Y}(c+\I y) \dy
\,\right|.
\end{eqnarray}

We now treat the terms \eqref{eq:term1inproof} and \eqref{eq:term2inproof} separately, starting with the latter.
By assumption (A2) and the observation
\[\int_w^{\infty} \left(\frac{\rm d}{{\rm d}y}F^Y(y + w)\right) \frac{e^{-c y}}{y} \frac{w}{e^{-c w}} \dy \le  \int_w^{\infty} \left(\frac{\rm d}{{\rm d}y}F^Y(y + w)\right)\dy = F^Y(\infty)-F^Y(2w)<1,\]
we conclude that
\[\int_w^{\infty} \left|\frac{\rm d}{{\rm d}y}F^Y(y + w)\right|\, \frac{e^{-c y}}{y} \dy < \frac{e^{-c w}}{w} < \infty,\] and therefore $F^Y$ satisfies the conditions of Lemma \ref{lem:ntaillemma}. As a result,
\eqref{eq:term2inproof} satisfies
\begin{align} \label{eq:term2binproof}
 \left| \int_{|y|>\sqrt{n}} \frac{1}{2 \pi} e^{(c+\I y) w} \bar{F}^{Y}(c+\I y) \dy
\right| \leq \frac{\kappa_4}{\sqrt{n}}, \end{align}
for some constant $\kappa_4 > 0$ independent of $n$.

We now bound the term \eqref{eq:term1inproof}. It is obviously majorized by
\[\E\left[
\int_{-\sqrt{n}}^{\sqrt{n}} \frac{1}{2 \pi} e^{c w} \left| \bar{F}^{{Y}}_{{n}}(c+\I y) - \bar{F}^{Y}(c+\I y) \right| \dy  \cdot {\bf 1}_{A_n}
 \right].\]
Let $p\in (1,2)$ and $q > 1$, with $p^{-1} + q^{-1} = 1$. By subsequent application of H\"older's Inequality,
this  expression is further bounded by
\[\E\left[
\left(\int_{-\sqrt{n}}^{\sqrt{n}} \left(\frac{1}{2 \pi} e^{c w} \right)^q \dy \right)^{1/q}
\left(\int_{-\sqrt{n}}^{\sqrt{n}} \left| \bar{F}^{{Y}}_{{n}}(c+\I y) - \bar{F}^{Y}(c+\I y) \right|^p \dy \right)^{1/p} \cdot {\bf 1}_{A_n}
 \right].\]
By computing the first integral, and an application of Jensen's inequality, this is not larger than
\[e^{c w} \frac{(2 \sqrt{n})^{1/q}}{2 \pi}
\left(\E\left[  \int_{-\sqrt{n}}^{\sqrt{n}} \left| \bar{F}^{{Y}}_{{n}}(c+\I y) - \bar{F}^{Y}(c+\I y) \right|^p \dy \cdot {\bf 1}_{A_n} \right]\right)^{1/p}
.\]
Finally applying Fubini's Theorem, we arrive at the upper bound
\begin{equation}
 \label{eq:term3inproof}
 e^{c w} \frac{(2 \sqrt{n})^{1/q}}{2 \pi}
\left(  \int_{-\sqrt{n}}^{\sqrt{n}} \E\left[ \left| \bar{F}^{{Y}}_{{n}}(c+\I y) - \bar{F}^{Y}(c+\I y) \right|^p \cdot {\bf 1}_{A_n} \right] \dy \right)^{1/p}.
 \end{equation}
We now study the behavior of (\ref{eq:term3inproof}), being an upper bound to (\ref{eq:term1inproof}),  as a function of $n$. To this end, we first derive a bound on the integrand.
Assumption (A3) implies that there exists a sequence of nonnegative random variables $Z_n$, $n\in{\mathbb N}$, such that
\begin{eqnarray} \nonumber
\left| \bar{F}^Y_n(s) - \bar{F}^{Y}(s) \right| \cdot {\bf 1}_{A_n}
&=& \left| s^{-1} (\Psi \tilde{X}_n) (s) - s^{-1} (\Psi \tilde{X})(s) \right| \cdot {\bf 1}_{A_n} \\ \label{eq:eqnwithsinverse}
&\leq & \left( \kappa_2 \left| \tilde{X}(s) - \tilde{X}_n(s) \right| + Z_n \right) \cdot |s^{-1}| \cdot  {\bf 1}_{A_n}  \text{ a.s.,}
\end{eqnarray}
for all $s  = c + \I y$ with $-\sqrt{n} \leq y \leq \sqrt{n}$. 

Now recall the so-called $c_r$-inequality
\[\E[|X+Y|^p] \leq 2^{p-1} (\E[|X|^p] + \E[|Y|^p]),\] and
the obvious inequality ${\bf 1}_{A_n}\leq 1$ a.s. As a consequence of
Lemma \ref{lem:sqrtlemma}, we thus obtain
\begin{eqnarray*}
\lefteqn{ \E\left[ \left| \bar{F}^{{Y}}_{{n}}(c+\I y) - \bar{F}^{Y}(c+\I y) \right|^p \cdot {\bf 1}_{A_n} \right] } \\
&\leq& 2^{p-1} \left( \kappa_2^p \E\left[ \left|  \tilde{X}(c+\I y) - \tilde{X}_n(c+\I y) \right|^p  \right]  +
 \E\left[|Z_n|^p\right]  \right) |c + \I y|^{-p}  \\
&\leq& 2^{p-1} (2^p \kappa_2^p  + \kappa_3) n^{-1/2} |c + \I y|^{-p}. 
\end{eqnarray*}
From \eqref{eq:term3inproof} and the straightforward inequality
\begin{eqnarray} \label{eq:integralofsinverse}
 \int_{-\infty}^{\infty} |c + \I y|^{-p} {\rm d} y
& \leq& \int_{-\infty}^{\infty} \frac{1}{(c^2+y^2)^{p/2}} {\rm d} y
 = c^{1-p} \int_{0}^{\infty} \frac{z^{-1/2}}{(1+z)^{p/2}} {\rm d} z \\ \nonumber
&=& C_0(p ):=c^{1-p} \pi^{1/2} \frac{\Gamma((p-1)/2)}{\Gamma(p/2)},
\end{eqnarray}
it follows that
\begin{eqnarray} \nonumber
\lefteqn{ \hspace{-5mm}\E\left[ \left|
\int_{-\sqrt{n}}^{\sqrt{n}} \frac{1}{2 \pi} e^{(c+\I y) w} (\bar{F}^{Y}_{{n}}(c+\I y) - \bar{F}^{Y}(c+\I y)) \dy
\right| \cdot {\bf 1}_{A_n} \right]} \\ \nonumber
&\leq & e^{c w} \frac{(2 \sqrt{n})^{1/q}}{2 \pi} \left(  2^{p-1} (2^p \kappa_2^p + \kappa_3) n^{-1/2} \int_{-\sqrt{n}}^{\sqrt{n}}
|c + \I y|^{-p} \dy \right)^{1/p}  
\\ \label{eq:term5inproof}
&\leq & C_1(p)\, n^{1/2-1/p},
 \end{eqnarray}
where
\[C_1(p) :=
e^{c w} \,\frac{2^{2-2/p}}{2 \pi}
 (2^p \kappa_2^p + \kappa_3)^{1/p} \left( C_0(p ) \right)^{1/p}.\]
It follows from $\Gamma((p-1)/2) = 2 \,\Gamma((p+1)/2) / (p-1)$ that
\[\lim_{p \da 1} (p-1)^{1/p} C_1(p)< \infty.\] 
This implies that there is a $\kappa_5 > 0$ such that
\begin{eqnarray} \label{eq:c1c3}
C_1(p) \leq \kappa_5 (p-1)^{-1/p} \quad \text{ for all $p \in (1,2)$}.
\end{eqnarray}
Upon combining the results presented in displays \eqref{eq:term1inproof}, \eqref{eq:term2inproof}, \eqref{eq:term2binproof}, \eqref{eq:term5inproof}, and \eqref{eq:c1c3}, we obtain
Inequality \eqref{eq:ratesonA}, as desired. 

{\bf Step 2.} On the complement of the event $A_n$ we have, by assumption (A1),
\begin{align} \label{eq:Acresult}
&\E\left[|F^{Y}_{n}(w) - F^Y(w)  | \cdot {\bf 1}_{A_n^{\mathfrak{c}}}\right] \leq
\prob{ A_n^{\mathfrak{c}} } 
\leq \kappa_1 n^{-1/2}.
\end{align}

{\bf Step 3.} When combining Inequalities \eqref{eq:ratesonA} and \eqref{eq:Acresult}, we obtain that
\begin{eqnarray*}
\E\left[|F^{Y}_{n}(w) - F^Y(w)  |\right] &= &\E\left[|F^{Y}_{n}(w) - F^Y(w) | \cdot {\bf 1}_{A_n}\right] + \E\left[|F^{Y}_{n}(w) - F^Y(w) | \cdot {\bf 1}_{ A_n^{\mathfrak{c}}}\right] \\
&\leq& \kappa_4 n^{-1/2} + \kappa_5 (p-1)^{-1/p} n^{1/2 - 1/p} + \kappa_1 n^{-1/2}.
\end{eqnarray*}
Now realize that we have the freedom to pick in the above inequality any $p\in(1,2)$. In particular, the choice $p=p_n: = 1 + 1/(2 \log(n+1)) \in (1,2)$ yields the bound
\begin{eqnarray*}
\E[|F^{Y}_{n}(w) - F^Y(w)  | &
\leq& \kappa_4 n^{-1/2}  + \kappa_5 (2 \log(n+1))^{1/p_n} n^{1/2 - 1/p_n} + \kappa_1 n^{-1/2} \\
&\leq& (\kappa_4 + 2 \kappa_5 e^{1/2} + \kappa_1) n^{-1/2} \log(n+1),
\end{eqnarray*}
using
\[\frac{1}{2}-\frac{1}{p_n}=-\frac{1}{2}+\frac{1}{1+2 \log(n+1)}\]
and \[n^{1/(1+2\log(n+1))} = \exp\left(\frac{\log n}{1+ 2 \log(n+1)}\right) \leq e^{1/2}.\] This finishes the proof of Thm.\ \ref{thm:convergenceRate}.
\end{proof}

\begin{remark} \label{rem:cdf}
Contrary to some of the literature mentioned in Section \ref{sec:intro} (e.g.\ \cite{MnatsakanovRuymgaartRuymgaart2008} and \cite{Shimizu2010}), we are not estimating a density but a cumulative distribution function. This difference translates into an additional $|s^{-1}|$ term in Equation \eqref{eq:eqnwithsinverse}, which enables us to bound the integral in Equation \eqref{eq:integralofsinverse}. This appears to be an crucial step in the proof of Theorem \ref{thm:convergenceRate}, because it means that the ill-posedness of the inversion problem (the fact that the inverse Laplace transform operator is not continuous) does not play a r\^ole: convergence of $\bar{F}_n^Y(\cdot)$ to $\bar{F}^Y(\cdot)$ implies convergence of $F_n^Y(w)$ to $F^Y(w)$. As a result, we do not need regularization techniques as in \cite{MnatsakanovRuymgaartRuymgaart2008} and \cite{Shimizu2010}. 
\end{remark}

\section{Applications} \label{sec:appl}
In this section we discuss two examples that have attracted a substantial amount of attention in the literature.
In both examples, the verification of the Assumptions (A1)--(A3) can be done, as we will demonstrate now.
\subsection{Workload estimation in an M/G/1 queue} \label{subsec:mg1}
In our first example we consider the so-called M/G/1 queue: jobs arrive at a service station according to a Poisson process with rate $\lambda>0$, where these jobs are i.i.d.\ samples with a service time distribution $B$; see e.g.\ see \cite{Cohen1982} for an in-depth account of the M/G/1 queue, and \cite{NazarathyPollet2012} for an annotated bibliography on inference in queueing models. Under the stability condition $\rho:= \lambda \E[B] \in (0,1)$ the queue's stationary workload is well defined. Our objective, motivated by the setup described in \cite{denBoerMandjesNunezZuraniewski2014}, is to estimate ${\mathbb P}(Y>w)$, where $Y$ is the stationary workload, and $w>0$ is a given threshold. The idea is that this estimate is based on samples of the queue's input process.

In more detail, the procedure works as follows.
By the Pollaczek-Khintchine formula \citep{Kendall1951}, the Laplace transform of the stationary workload distribution $Y$ satisfies the relation
\begin{eqnarray} \label{eq:PKformula}
 \tilde{Y}(s) = \frac{s (1 - \rho)}{s - \lambda + \lambda \tilde{B}(s)}, \quad s \in \C_{+}.
 \end{eqnarray}
For subsequent time intervals of (deterministic) length $\delta > 0$, the amount of work arriving to the queue is measured.
These observations are i.i.d.\ samples from a compound distribution $X \stackrel{\rm d}{=} \sum_{i=1}^N B_i$, with $N$ Poisson distributed with parameter $\lambda \delta$,
and the random variables $B_1, B_2, \ldots$ independent and distributed as $B$ (independent of $N$).
By Wald's equation we have $\E[X] = \delta \rho$, and a direct computation yields
$\tilde{X}(s) = \exp(-\lambda \delta + \lambda \delta \tilde{B}(s))$.
Combining this with \eqref{eq:PKformula}, we obtain the following relation between the Laplace transforms of $X$ and $Y$:
\begin{eqnarray} \label{eq:XinY}
\tilde{Y}(s) 
= \frac{s (1 - \delta^{-1} \E[X])}{s + \delta^{-1} \Log(\tilde{X}(s))}.
\end{eqnarray}
Here $\Log$ is the distinguished logarithm of $\tilde{X}(s)$ \citep{Chung2001}, which is convenient to work with in this context
\cite{VanEsEtAl2007}.
Our goal is to estimate ${\mathbb P}(Y \leq w)=F^Y(w)$, for a given $w > 0$, based on an independent sample $X_1, \ldots, X_n$.
We use the estimator $F_n^Y(w)$ defined in Section \ref{sec:method}, for an arbitrary $c > 0$, and with
\begin{itemize}
\item[(i)]  $\mathcal{X}$ the collection of all random variables $X'$ of the form $\sum_{i=1}^{N'} B'_i$ with $N'$ Poisson distributed with strictly positive mean, $\{B'_i\}_{i \in \N}$ i.i.d., independent of $N'$, nonnegative, and with $0 < \E[X'] < \delta$;
 \item[(ii)]
 the sets
 \[E_n := \{ 0 \le \frac{1}{n} \sum_{i=1}^n X_i < \delta \},\] so as to ensure that the `empirical occupation rate' of the queue,
 $\delta^{-1} n^{-1} \sum_{i=1}^n X_i$, is strictly smaller than one, 
 and that therefore the Pollaczek-Khintchine formula holds;
\item[(iii)] $\Psi$ defined through
\[(\Psi \tilde{X})(s) = \frac{s (1 + \delta^{-1} \tilde{X}'(0))}{s + \delta^{-1} \Log(\tilde{X}(s))},\:\:\:s \in \C_{+},\]
 where $\tilde{X}'(t)$ denotes the derivative of $\tilde{X}(t)$ in $t \in (0,\infty)$ and $\tilde{X}'(0) = \lim_{t \downarrow 0} \tilde{X}'(t) = -\E[X]$.
 \end{itemize}

 \vspace{3mm}

\begin{theorem} \label{thm:mg1rates} Consider the estimation procedure outlined above.
Suppose $F^Y$ is continuously differentiable, twice differentiable in $w$, and $\E[B^2] < \infty$.
Then there is a constant $C > 0$ such that
\[\E\left[|F_n^Y(w) - F^Y(w)|\right] \leq Cn^{-1/2} \log(n+1) \]
for all $n \in \N$.
\end{theorem}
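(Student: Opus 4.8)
The plan is to verify that assumptions (A1)--(A3) of Theorem~\ref{thm:convergenceRate} hold in this concrete setting, after which the claimed bound follows immediately by applying that theorem. The hypotheses on $F^Y$ in the statement are precisely (A2), so the work lies in (A1) and (A3), with a natural candidate for the events $A_n$ and the random variables $Z_n$ to be identified along the way.

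\textbf{Step 1: choosing $A_n$ and verifying (A1).} I would take $A_n$ to be the event that the empirical mean $\bar X_n := n^{-1}\sum_{i=1}^n X_i$ stays bounded away from the forbidden value $\delta$ and from $0$, say $A_n := \{\, |\bar X_n - \E[X]| \le \varepsilon\,\}$ for a fixed $\varepsilon>0$ small enough that $\E[X] + \varepsilon < \delta$ (recall $\E[X] = \delta\rho < \delta$); such an $\varepsilon$ exists since $\rho < 1$. Then $A_n \subset E_n$ as required. Since $\E[B^2]<\infty$ gives $\E[X^2]<\infty$, Chebyshev's inequality yields $\prob{A_n^{\mathfrak c}} \le \Var(X)/(n\varepsilon^2)$, which is $o(n^{-1/2})$ and in particular $\le \kappa_1 n^{-1/2}$. (Alternatively one could invoke one of the auxiliary lemmas of Section~\ref{sec:aux} to get the same conclusion.) This also shows $\prob{E_n^{\mathfrak c}}\to 0$, as demanded of the $E_n$ in Section~\ref{sec:method}.

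\textbf{Step 2: verifying the Lipschitz-type bound (A3).} On $A_n$ I must compare
$(\Psi\tilde X_n)(s) = s(1 + \delta^{-1}\tilde X_n'(0))/(s + \delta^{-1}\Log\tilde X_n(s))$
with $(\Psi\tilde X)(s) = s(1-\delta^{-1}\E[X])/(s + \delta^{-1}\Log\tilde X(s))$ for $s = c + \I y$, $|y|\le\sqrt n$. Writing both as ratios $sN_n/D_n$ and $sN/D$ and using $|sN_n/D_n - sN/D| \le |s|\,|N_n - N|/|D_n| + |s|\,|N|\,|D - D_n|/(|D_n||D|)$, the task reduces to three estimates: (a) the numerator difference $|N_n - N| = \delta^{-1}|\tilde X_n'(0) - \tilde X'(0)| = \delta^{-1}|\bar X_n - \E[X]|$, which is a clean empirical-mean deviation and supplies (part of) the $Z_n$ term; (b) a uniform lower bound $|D_n|, |D| \ge c$ on the denominators on the strip --- here the key point is that $\Re(s + \delta^{-1}\Log\tilde X(s)) \ge \Re(s) = c$ because $\Re(\Log\tilde X(s)) = \log|\tilde X(s)| \le 0$, and likewise for $\tilde X_n$, so both denominators have real part at least $c$; (c) the denominator difference $|D - D_n| = \delta^{-1}|\Log\tilde X(s) - \Log\tilde X_n(s)|$, which I would control by $|\Log\tilde X(s) - \Log\tilde X_n(s)| \le \kappa\,|\tilde X(s) - \tilde X_n(s)|$ on $A_n$, valid because on $A_n$ the empirical transform $\tilde X_n(s)$ stays in a region where the distinguished logarithm is Lipschitz (its argument is bounded away from $0$; one uses that $\tilde X_n(0)=1$ together with continuity, and the fact that for compound Poisson laws $\tilde X_n(s) = \exp(-\hat\lambda\delta + \hat\lambda\delta\tilde B_n(s))$ never vanishes). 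Combining (a)--(c), one obtains on $A_n$
\[
|(\Psi\tilde X_n)(s) - (\Psi\tilde X)(s)| \le \kappa_2 |\tilde X_n(s) - \tilde X(s)| + Z_n,
\]
with $Z_n$ a constant multiple of $|\bar X_n - \E[X]|$. Finally $\sup_{p\in(1,2)}\E[|Z_n|^p] \le \sup_{p\in(1,2)}\E[|Z_n|^2]^{p/2}\cdot(\text{const}) \lesssim (\Var(X)/n)^{1/2} \le \kappa_3 n^{-1/2}$, using $\E[X^2]<\infty$; this is exactly the moment condition in (A3).

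\textbf{Step 3: conclusion and main obstacle.} With (A1)--(A3) in hand, Theorem~\ref{thm:convergenceRate} applied with this $w$, $c$, and these $E_n$, $A_n$ gives $\E[|F_n^Y(w) - F^Y(w)|] \le C n^{-1/2}\log(n+1)$, which is the claim. I expect the main obstacle to be Step~2(c): establishing a genuinely uniform (in $s$ on the growing strip $\Re s = c$, $|\Im s|\le\sqrt n$, and almost sure on $A_n$) Lipschitz bound for the distinguished logarithm $\tilde X \mapsto \Log\tilde X$. This needs care because the distinguished logarithm is defined by analytic continuation along the curve $t\mapsto \tilde X(c+\I t)$, and one must argue that on $A_n$ this curve (and its empirical counterpart) stays in a fixed compact region of $\C\setminus(-\infty,0]$ on which $\log$ is Lipschitz --- for compound Poisson transforms this follows from the explicit form $\tilde X(s)=\exp(-\lambda\delta+\lambda\delta\tilde B(s))$, whose real part of the exponent is bounded, but verifying the analogous control for $\tilde X_n$ uniformly on $A_n$ is where the real estimation happens. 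Everything else is a routine assembly of Chebyshev bounds, the triangle inequality for quotients, and the half-plane positivity $\Re(\Log\tilde X(s))\le 0$.
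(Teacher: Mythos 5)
Your overall strategy (verify (A1)--(A3) and invoke Theorem \ref{thm:convergenceRate}) is the same as the paper's, but Step 2 has a genuine gap, and one of its key inequalities is backwards. Since $\Re\big(\Log\tilde{X}(s)\big)=\log|\tilde{X}(s)|\le 0$, you get $\Re\big(s+\delta^{-1}\Log\tilde{X}(s)\big)\le c$, not $\ge c$: the log term \emph{decreases} the real part, which can even become negative (it is $\ge c-2\lambda$ only). The true denominator is indeed bounded away from zero, but the argument is different: by Pollaczek--Khinchine, $s^{-1}\tilde{Y}(s)=(1-\rho)/(s+\delta^{-1}\Log\tilde{X}(s))$ with $|\tilde{Y}(s)|\le1$, so $|s+\delta^{-1}\Log\tilde{X}(s)|\ge c(1-\rho)$. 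No such probabilistic identity is available for the \emph{empirical} denominator $s+\delta^{-1}\Log\tilde{X}_n(s)$, and your claim that $\tilde{X}_n$ "never vanishes" because it has compound Poisson form is false: $\tilde{X}_n$ is just an average of exponentials $n^{-1}\sum_i e^{-sX_i}$, which can vanish on the line $\Re(s)=c$, in which case the distinguished logarithm is not even well defined along the path.

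This is exactly why your choice $A_n=\{|\bar X_n-\E[X]|\le\varepsilon\}$ is too small. Assumption (A3) demands an almost-sure bound on $A_n$ with a deterministic $\kappa_2$, so you need almost-sure control, uniformly over the whole strip $s=c+\I y$, $|y|\le\sqrt n$, of (i) the nonvanishing of $\tilde{X}_n$, (ii) the validity of the series representation of $\Log(\tilde{X}_n/\tilde{X})$, and (iii) a lower bound on $|s+\delta^{-1}\Log\tilde{X}_n(s)|$; none of these follow from a bound on the empirical mean alone. The paper builds this into the event: $A_{n,1}$ requires $\sup_{|y|\le\sqrt n}|\tilde{X}_n(c+\I y)/\tilde{X}(c+\I y)-1|$ to be at most $\min\{1/2,\,c\delta(1-\rho)/(2\log4)\}$, where the second threshold is calibrated so that the perturbation of the denominator is at most half of the P--K lower bound, yielding $|s^{-1}(\Psi\tilde{X}_n)(s)|\le\mathrm{const}\cdot c^{-1}$ on $A_n$; and it then proves $\prob{A_{n,1}^{\mathfrak c}}=O(n^{-1/2})$ via the Devroye-type uniform deviation bound of Lemma \ref{lem:duvroye} (a $\sqrt n\,e^{-n\beta^2/18}$ estimate), not just Chebyshev. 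You correctly identify the uniform Lipschitz control of the distinguished logarithm as the main obstacle, but your proposal does not resolve it, and that resolution (enlarging $A_n$ and supplying the uniform deviation lemma while keeping $\prob{A_n^{\mathfrak c}}=O(n^{-1/2})$) is the substantive content of the paper's proof.
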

\begin{proof}
Let $n \in \N$ be arbitrary, and define the events
\[ A_{n,1} := \left\{ \sup_{- \sqrt{n} \leq y \leq \sqrt{n}} \left|\frac{\tilde{X_n}(c + \I y)}{\tilde{X}(c + \I y)} - 1\right| \leq \min\left\{ \frac{1}{2},
\frac{c \delta (1 - \delta^{-1} \E[X])}{2 \log 4 } \right\}
   \right\}, \]
\[ A_{n,2} := \left\{ \delta^{-1} \left|\E[X] - \frac{1}{n} \sum_{i=1}^n X_i\right| \leq \delta^{-1} \E[X] (1 - \delta^{-1} \E[X])\right\}, \]
and $A_{n} := A_{n,1} \cap A_{n,2}$. We have $A_{n,2} \subset E_n$ (because, using $\rho = \delta^{-1} \E[X]$, the event $A_{n,2}$ implies $\delta^{-1} n^{-1} \sum_{i=1}^n X_i  \in [\rho^2, \rho (2 - \rho)] \subset (0,1)$) and thus $A_n \subset E_n$. 
We show that assumptions (A1)--(A3), as defined in  Section \ref{sec:rates}, are satisfied. To this end, we only need to show (A1) and (A3), since (A2) is assumed in the statement of the theorem. 

$\rhd$ Assumption {(A1).} Let \[\beta := \exp(-2 \lambda \delta) \min\left\{ \frac{1}{2},\frac{c \delta (1 - \delta^{-1} \E[X])}{2 \log 4 } \right\}.\]Then, for $s \in \C_{+}$,
 \begin{equation}\label{eq:inftildeX}
|\tilde{X}(s) | = |\exp(- \lambda \delta + \lambda \delta \tilde{B}(s))| \geq \exp(- \lambda \delta + \lambda \delta \Re( \tilde{B}(s)))\geq \exp(-2 \lambda \delta),
\end{equation}
which implies that
\begin{align*}
\prob{A_{n,1}^{\mathfrak{c}}} \leq \prob{ \sup_{- \sqrt{n} \leq y \leq \sqrt{n}} | \tilde{X_n}(c + \I y) - \tilde{X}(c + \I y) | >  \beta},
\end{align*}
so that Lemma \ref{lem:duvroye} then yields
\begin{align*}
\prob{A_{n,1}^{\mathfrak{c}}} < 4 \left(1 + \frac{8 \sqrt{n} \E[|X|]}{\beta} \right) e^{-n \beta^2 / 18} + \prob{\left| \frac{1}{n} \sum_{i=1}^n X_i \right| \geq \frac{4}{3} \E[|X|]}.
\end{align*}
Since $\sqrt{n} \exp(-n \beta^2 / 18) = O(n^{-1/2})$ and
\[\prob{\left| \frac{1}{n} \sum_{i=1}^n X_i \right| \geq \frac{4}{3} \E[|X|]} \leq
\prob{\left| \E[X] - \frac{1}{n} \sum_{i=1}^n X_i \right| \geq \frac{1}{3} \E[X] } \leq
n^{-1} \frac{9 \E[(X - \E[X])^2]}{\E[X]^2}, \]
using the nonnegativity of $X$ and Chebyshev's inequality, it follows that ${\mathbb P}({A_{n,1}^{\mathfrak{c}}} )= O(n^{-1/2})$.
It also follows easily from Chebyshev's inequality that $\prob{A_{n,2}^{\mathfrak{c}}} = O(n^{-1/2})$. It follows immediately that $\prob{A_{n}^{\mathfrak{c}}} \leq \prob{A_{n,1}^{\mathfrak{c}}} +\prob{A_{n,2}^{\mathfrak{c}}} =O(n^{-1/2})$, which implies that assumption (A1) is satisfied.

$\rhd$ {Assumption (A3).} Fix $y \in[-\sqrt{n}, \sqrt{n}]$ and $s = c + \I y$. Then
\begin{eqnarray} \nonumber
\lefteqn{ |\,(\Psi \tilde{X}_n)(s) - (\Psi \tilde{X})(s)\,|}\\
&\leq& \left|\frac{s (1 + \delta^{-1} \tilde{X}_n'(0))}{s + \delta^{-1} \Log(\tilde{X}_n(s))} - \frac{s (1 + \delta^{-1} \tilde{X}_n'(0))}{s + \delta^{-1} \Log(\tilde{X}(s))} \right| \nonumber \\ \nonumber
&&+\,  \left|\frac{s (1 + \delta^{-1} \tilde{X}_n'(0))}{s + \delta^{-1} \Log(\tilde{X}(s))} - \frac{s (1 + \delta^{-1} \tilde{X}'(0))}{s + \delta^{-1} \Log(\tilde{X}(s))} \right| \\ \nonumber
&\leq &
\left| \frac{ \delta^{-1} \Log(\tilde{X}(s)) - \delta^{-1} \Log(\tilde{X}_n(s))}{s (1 + \delta^{-1} \tilde{X}'(0))}
\cdot \frac{s (1 + \delta^{-1} \tilde{X}_n'(0))}{s + \delta^{-1} \Log(\tilde{X}_n(s))} \cdot \frac{s (1 + \delta^{-1} \tilde{X}'(0))}{s + \delta^{-1} \Log(\tilde{X}(s))} \right| \\ \nonumber
&&+ \, \left|
\frac{\delta^{-1} (\tilde{X}_n'(0)) -  \tilde{X}'(0))}{(1 + \delta^{-1} \tilde{X}'(0))} \cdot
\frac{s (1 + \delta^{-1} \tilde{X}'(0))}{s + \delta^{-1} \Log(\tilde{X}(s))} \right| \\ \nonumber
&\leq&  \frac{\delta^{-1}}{1 + \delta^{-1} \E[X]} \left| \Log(\tilde{X}(s)) - \Log(\tilde{X}_n(s)) \right|
\cdot \left| s^{-1} (\Psi \tilde{X}_n)(s) \right| \cdot \left| (\Psi \tilde{X})(s) \right|\\ \label{eq:mg1logeq1}
&&+\, \frac{\delta^{-1}}{1 + \delta^{-1} \E[X]}  \left| \E[X] -\frac{1}{n} \sum_{i=1}^n X_i \right|
\cdot \left| (\Psi \tilde{X})(s) \right|.
\end{eqnarray}

If $f: \R \ra \R$ is a continuous function with $f(0) = 1$ and $f(t) \neq 0$ for all $t \in \R$,
then for all $t$ such that $|f(t) -1| \leq \frac{1}{2}$ we have
$\Log(f(t)) = L(f(t))$, where, for $z \in \C,$ $|z-1|<1$,
\begin{align} \label{eq:Lfunction}
L(z) = \sum_{j \geq 1} \frac{(-1)^{j-1}}{j} (z-1)^j;
\end{align}
this follows from the construction of the distinguished logarithm \citep{Chung2001}. In addition, if $|z-1| \leq \frac{1}{2}$, then
\begin{align} \label{eq:Lfunction2}
|L(z)| \leq \sum_{j \geq 1} \frac{1}{j} |z-1|^j = \log\left(\frac{1}{1 - |z-1|}\right) \leq |z-1| \log 4.
\end{align}
This implies that, on $A_n$, we have 
\begin{eqnarray} \nonumber
\lefteqn{\hspace{-1cm}\left|\,\Log(\tilde{X}_n(c + \I y)) - \Log(\tilde{X}(c + \I y)) \right|
= \left|\,\Log\left( \frac{\tilde{X}_n(c + \I y)}{\tilde{X}(c + \I y)}\right) \right|
= \left|\,L\left( \frac{\tilde{X}_n(c + \I y)}{\tilde{X}(c + \I y)}\right) \right|}\\
&\leq& \left| \frac{\tilde{X}_n(c + \I y)}{\tilde{X}(c + \I y)} - 1 \right| \log 4
\leq \left| \tilde{X}_n(c + \I y) - \tilde{X}(c + \I y) \right| (\log 4 )\exp(2 \lambda \delta),\label{eq:mg1logeq2}
\end{eqnarray}
where the last inequality follows from \eqref{eq:inftildeX}.

Furthermore, we have on $A_n$ that
\begin{eqnarray} \nonumber\lefteqn{
 |s^{-1} (\Psi \tilde{X}_n)(s)| = \left| \frac{1 - \delta^{-1} \frac{1}{n} \sum_{i=1}^n X_i}{s + \delta^{-1} \Log(\tilde{X}_n(s))} \right|} \\ \nonumber
&\leq& \left| \frac{1 - \delta^{-1} \frac{1}{n} \sum_{i=1}^n X_i}{1-\delta^{-1} \E[X]} \right| \cdot
\left| \frac{s + \delta^{-1} \Log(\tilde{X}_n(s))}{s + \delta^{-1} \Log(\tilde{X}(s))} \right|^{-1}
\cdot \left| \frac{1 - \delta^{-1} \E[X]}{s + \delta^{-1} \Log(\tilde{X}(s))} \right| \\ \nonumber
&\leq& \left| 1 + \delta^{-1} \frac{\E[X] - \frac{1}{n} \sum_{i=1}^n X_i}{1-\delta^{-1} \E[X]} \right| \cdot
\left| 1 + \frac{\delta^{-1} \Log(\tilde{X}_n(s) / \tilde{X}(s))}{s + \delta^{-1} \Log(\tilde{X}(s))} \right|^{-1}
\cdot \left| s^{-1} \tilde{Y}(s) \right| \\ \nonumber
&\leq& \left(1 + \frac{\delta^{-1} \E[X] (1 - \delta^{-1} \E[X])}{1-\delta^{-1} \E[X]} \right) \cdot
\left| 1 + (\Psi \tilde{X})(s) \frac{\delta^{-1} L(\tilde{X}_n(s) / \tilde{X}(s))}{s (1 - \delta^{-1} \E[X])} \right|^{-1}
\cdot c^{-1} \\
&\leq& \left(1 + \frac{\delta^{-1} \E[X] (1 - \delta^{-1} \E[X])}{1-\delta^{-1} \E[X]} \right) \cdot 2 \cdot c^{-1}, \label{eq:mg1logeq3}
\end{eqnarray}
since $|1+z|^{-1} \leq (1 -|z|)^{-1} \leq (1 - 1/2)^{-1}$ for all $z \in \C$ with $|z| \leq \frac{1}{2}$; bear in mind that, in particular,
\begin{align*}
 \left| (\Psi \tilde{X})(s) \frac{\delta^{-1} L(\tilde{X}_n(s) / \tilde{X}(s))}{s (1 - \delta^{-1} \E[X])} \right|
\leq \frac{\delta^{-1} c^{-1} |L(\tilde{X}_n(s) / \tilde{X}(s))| }{1 - \delta^{-1} \E[X]}
\leq \frac{\delta^{-1} c^{-1} \log 4}{1 - \delta^{-1} \E[X]} \left| \frac{\tilde{X}_n(s)}{\tilde{X}(s)} - 1 \right| 
\leq \frac{1}{2}
\end{align*}
on $A_{n,1}$. Finally, writing
\[ Z_n = \frac{\delta^{-1}}{1 + \delta^{-1} \E[X]}  \left| \,\E[X] -\frac{1}{n} \sum_{i=1}^n X_i \,\right|
\cdot \left| (\Psi \tilde{X})(s) \right|, \]
and noting that $\E[X^2]< \infty$, it follows from Lemma \ref{lem:sqrtlemma} and $|(\Psi \tilde{X})(s)| \leq 1$ that $\E[|Z_n|^p] \leq \kappa_3 n^{-1/2}$ for all $p \in (1,2)$ and some $\kappa_3 > 0$ independent of $n$ and $p$. Combining this with equations \eqref{eq:mg1logeq1}, \eqref{eq:mg1logeq2}, and \eqref{eq:mg1logeq3}, implies that assumption (A3) holds.
\end{proof}

\vspace{\baselineskip}
\begin{remark}
An important problem in \cite{denBoerMandjesNunezZuraniewski2014} is to develop heuristics for choosing $\delta$, in order to minimize the expected estimation error. In the proof of Theorem \ref{thm:convergenceRate} we show the following upper bound
\begin{align} \label{eq:ubmg1}
 \E\left[ | F_n^Y(w) - F^Y(w)|\right] &\leq
  \kappa_4 n^{-1/2} + C_1(p) n^{1/2 - 1/p} + \prob{A_n^{\mathfrak{c}}},
 \end{align}
where $p =p_n= 1 + 1 / (2 \log(n+1))$. A close look at the proof reveals that
$\lim_{p \da 1} (p-1)^{1/p} C_1(p) = \exp(cw) \pi^{-1} (2 \kappa_2 + \kappa_3)$, and
for the M/G/1 example it is not difficult to show that
$\kappa_2 = \kappa_2(\delta) \leq 2 c^{-1} \delta^{-1}$, $\kappa_3 = \kappa_3(\delta) \leq (1 + \Var[X]) \delta^{-1} (1 + \rho)^{-1}$, $\Var[X] = \delta \lambda \E[B^2]$, and $\prob{A_n^{\mathfrak{c}}} = O(n^{-1})$. This means that, for large $n$, the right-handside of \eqref{eq:ubmg1} can be approximated by
\begin{align} \label{eq:anotherub}
(\alpha + \beta \delta^{-1}) e^{1/2} n^{-1/2} \log(n+1)
 \end{align}
 where
 \[ \alpha: = \kappa_4 + e^{cw} \pi^{-1} \lambda \E[B^2] (1 + \rho)^{-1},\:\:\:\:\: \beta := e^{cw} \pi^{-1} (4 c^{-1} + (1 + \rho)^{-1}). \]
If we neglect the $\log(n+1)$ term, then, on a fixed time horizon of length $T = \delta n$, the upper bound
\eqref{eq:anotherub} equals
\begin{align*}
(\alpha \delta^{1/2} + \beta \delta^{-1/2}) e^{1/2} T^{-1/2},
 \end{align*}
which suggests that $\delta$ should be chosen that minimizes $\alpha \delta^{1/2} + \beta \delta^{-1/2}$. In the application \cite{denBoerMandjesNunezZuraniewski2014} $\alpha$ and $\beta$ are unknown (because they depend on e.g.\ $\lambda$ and $\E[B^2]$), but if they can be replaced by known upper bounds $\alpha_u$ and $\beta_u$, then a heuristic choice for $\delta$ is to pick a minimizer of $\alpha_u \delta^{1/2} + \beta_u \delta^{-1/2}$ (yielding $\delta = \beta_u/\alpha_u$).
\end{remark}

\begin{remark} \label{rem:low}
Interestingly, the technique described above enables a fast and  accurate estimation of rare-event probabilities (i.e., $1-F^Y(w)$ for $w$ large), even in situations in which the estimation is based on input $X_1,\ldots,X_n$ for which the corresponding queue would not have exceeded level $w$. This idea, which resonates the concepts developed in \cite{MandjesVandeMeent2009}, has been worked out in detail in  \cite{denBoerMandjesNunezZuraniewski2014}. A numerical illustration of our estimator in this setting, and a comparison to the empirical estimator, is provided in Section \ref{sec:num}.
\end{remark}

\subsection{Decompounding} \label{subsec:decompounding}
Our second application involves decompounding a compound Poisson distribution, a concept that has been studied in the literature already (see the remarks on this in the introduction).

We start by providing a formal definition of the problem. Let $\mathcal{X}$ denote the collection of random variables of the form $\sum_{i=1}^{N'} Y'_i$, with $N'$ Poisson distributed with $\E[N'] > 0$, and  $(Y'_i)_{i \in \N}$ i.i.d.\ nonnegative random variables, independent of $N'$, and with $\prob{Y'_1 = 0} = 0$ (which can be assumed without loss of generality). For each $\tilde{X} \in \tilde{\mathcal{X}}$, let, for $s \in \C_{+}$,
\[ (\Psi \tilde{X})(s) = 1 + \frac{1}{- \log(\tilde{X}(\infty))} \Log( \tilde{X}(s)),\]
 where $\Log$ denotes the distinguished logarithm of $\tilde{X}$, and
 \[\tilde{X}(\infty):= \lim_{s \ra \infty, s \in \R} \tilde{X}(s) = \lim_{s \ra \infty, s \in \R} e^{\E[N] (-1 + \tilde{Y_1}(s))} = e^{-\E[N]} \]
 if $X = \sum_{i=1}^N Y_i$; here the last equality follows from $\prob{Y_1 = 0} = 0$.

Let $X = \sum_{i=1}^N Y_i$ be an element of $\mathcal{X}$, for some particular $Y \stackrel{\rm d}{=} Y_1$ and a Poisson distributed random variable $N$ with mean $\lambda > 0$. Since
 $-\log(\tilde{X}(\infty)) = \lambda$ and $\tilde{X}(s) = \exp(-\lambda + \lambda \E[-s Y])$, we have $\tilde{Y} = \Psi \tilde{X}$.
The idea is to estimate $F^Y(w)$, for $w > 0$, based on a sample $X_1, \ldots, X_n$ of $n \in \N$ independent copies of $X$, using the estimator $F_n^Y(w)$ of Section \ref{sec:method}, with, for $n\in{\mathbb N}$,
\[E_n: = \left\{ \frac{1}{n} \sum_{i=1}^n {\bf 1}_{\{X_i = 0\}} \in (0,1) \right\}\] and arbitrary $c> 0$.

\vspace{\baselineskip}
\begin{theorem} \label{thm:decompoundingrates}
Consider the estimation procedure outlined above.
Suppose 
$F^{Y}$ is continuously differentiable, twice differentiable in $w$, and suppose $\E[|X|^2] < \infty$. Then there is a constant $C>0$
such that
\[\E\left[|F_n^Y(w) - F^Y(w)|\right] \leq C n^{-1/2} \log(n+1) \]
for all $n \in \N$.
\end{theorem}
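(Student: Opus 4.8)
The plan is to verify assumptions (A1)--(A3) for the decompounding setup and then invoke Theorem \ref{thm:convergenceRate}. As in the proof of Theorem \ref{thm:mg1rates}, the strategy is to construct an event $A_n \subset E_n$ on which the logarithmic transform $\Log(\tilde X_n(s))$ is well-behaved, and to show $\prob{A_n^{\mathfrak c}} = O(n^{-1/2})$ using the uniform concentration bound (Lemma \ref{lem:duvroye}) together with Chebyshev's inequality. Concretely, I would set
\[
A_{n,1} := \Bigl\{ \sup_{-\sqrt n \le y \le \sqrt n} \bigl| \tilde X_n(c+\I y) - \tilde X(c+\I y) \bigr| \le \eta \Bigr\},
\qquad
A_{n,2} := \Bigl\{ \bigl| \tfrac1n\textstyle\sum_{i=1}^n {\bf 1}_{\{X_i=0\}} - e^{-\lambda} \bigr| \le \tfrac12 e^{-\lambda} \Bigr\},
\]
for a suitably small constant $\eta>0$ (chosen so that $|\tilde X_n/\tilde X - 1|\le \tfrac12$ on $A_{n,1}$, using the lower bound $|\tilde X(s)| \ge e^{-2\lambda}$ exactly as in \eqref{eq:inftildeX}), and take $A_n := A_{n,1}\cap A_{n,2}$. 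The inclusion $A_{n,2}\subset E_n$ is immediate; $\prob{A_{n,1}^{\mathfrak c}} = O(n^{-1/2})$ follows from Lemma \ref{lem:duvroye} since $\sqrt n\,e^{-n\eta^2/18} = O(n^{-1/2})$ and the auxiliary average-of-$X_i$ term is controlled by Chebyshev (using $\E[X^2]<\infty$); and $\prob{A_{n,2}^{\mathfrak c}} = O(n^{-1/2})$ — in fact $O(n^{-1})$ — again by Chebyshev applied to the Bernoulli variables ${\bf 1}_{\{X_i=0\}}$. This gives (A1). Assumption (A2) is hypothesized in the statement.

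For (A3), write $\lambda_n := -\log(\tilde X_n(\infty)) = -\log\bigl(\tfrac1n\sum_{i=1}^n {\bf 1}_{\{X_i=0\}}\bigr)$ and $\lambda = -\log(\tilde X(\infty))$, so that $(\Psi\tilde X_n)(s) = 1 + \lambda_n^{-1}\Log(\tilde X_n(s))$ and $(\Psi\tilde X)(s) = 1 + \lambda^{-1}\Log(\tilde X(s))$. I would split
\[
(\Psi\tilde X_n)(s) - (\Psi\tilde X)(s)
= \lambda_n^{-1}\bigl(\Log(\tilde X_n(s)) - \Log(\tilde X(s))\bigr)
+ (\lambda_n^{-1} - \lambda^{-1})\Log(\tilde X(s)).
\]
The first term is handled exactly as in \eqref{eq:mg1logeq2}: on $A_n$, $\Log(\tilde X_n(s)) - \Log(\tilde X(s)) = L(\tilde X_n(s)/\tilde X(s))$ with $|L(z)|\le |z-1|\log 4$, so it is bounded by a constant times $|\tilde X_n(s) - \tilde X(s)|$ (the constant involving $\lambda_n^{-1}\le 2\lambda^{-1}$ on $A_{n,2}$ and $e^{2\lambda}\log 4$). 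For the second term, $|\lambda_n^{-1} - \lambda^{-1}| = |\lambda_n - \lambda|/(\lambda_n\lambda) \le 2\lambda^{-2}|\lambda_n-\lambda|$ on $A_{n,2}$, and $|\Log(\tilde X(s))| = |\lambda(\tilde Y(s)-1)|\le 2\lambda$ since $|\tilde Y(s)|\le 1$; hence this term is bounded by $4\lambda^{-1}|\lambda_n - \lambda|$, which I would designate as (part of) $Z_n$. It remains to check $\sup_{p\in(1,2)}\E[|Z_n|^p] \le \kappa_3 n^{-1/2}$: by the mean value theorem $|\lambda_n - \lambda| = |\log(\tfrac1n\sum{\bf 1}_{\{X_i=0\}}) - \log e^{-\lambda}|$ is, on $A_{n,2}$, at most $2e^{\lambda}\bigl|\tfrac1n\sum{\bf 1}_{\{X_i=0\}} - e^{-\lambda}\bigr|$, whose $p$-th moment for $p\in(1,2)$ is $O(n^{-1/2})$ by Lemma \ref{lem:sqrtlemma} (the Bernoulli summands being bounded, hence square-integrable). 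Off $A_{n,2}$ the bound $|\lambda_n^{-1}-\lambda^{-1}|$ need not be small, but since $Z_n$ enters (A3) only on the event $A_n \subset A_{n,2}$, I would simply define $Z_n$ to be this quantity multiplied by ${\bf 1}_{A_{n,2}}$ (or bound $\E[|Z_n|^p]$ by splitting on $A_{n,2}$ and its complement, the latter contributing $O(n^{-1})$ by (A1)). With (A1)--(A3) verified, Theorem \ref{thm:convergenceRate} delivers the claimed rate.

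\textbf{The main obstacle} is the control of $\lambda_n^{-1} = -1/\log(\tfrac1n\sum {\bf 1}_{\{X_i=0\}})$: unlike the M/G/1 case, where the normalizing constant $1+\delta^{-1}\tilde X_n'(0) = 1 - \delta^{-1}\tfrac1n\sum X_i$ is a linear statistic, here it appears inside a logarithm of an empirical frequency, so both its closeness to $\lambda^{-1}$ and the integrability of the error require the truncation onto $A_{n,2}$ to be set up carefully — in particular, ensuring that $\tfrac1n\sum{\bf 1}_{\{X_i=0\}}$ is bounded away from $0$ and $1$ on $A_n$ so that $\lambda_n$ is finite and $\lambda_n^{-1}$ is bounded. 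A secondary technical point is that $Z_n$ as defined depends on $s=c+\I y$ through nothing (the bound $|\Log(\tilde X(s))|\le 2\lambda$ is uniform), so no genuine uniformity-in-$y$ issue arises for the $Z_n$ part, but one should double-check that the Lipschitz-type coefficient $\kappa_2$ multiplying $|\tilde X_n(s)-\tilde X(s)|$ is likewise uniform in $y\in[-\sqrt n,\sqrt n]$ — it is, since all the bounds ($|\tilde X(s)|\ge e^{-2\lambda}$, $\lambda_n^{-1}\le 2\lambda^{-1}$) are $y$-independent on $A_n$.
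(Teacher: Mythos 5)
Your proposal follows essentially the same route as the paper's proof of Theorem \ref{thm:decompoundingrates}: the same $A_{n,1}$ (with $\eta=e^{-2\lambda}/2$ via \eqref{eq:inftildeX}), Lemma \ref{lem:duvroye} plus Chebyshev for (A1), the distinguished-logarithm expansion \eqref{eq:Lfunction}--\eqref{eq:Lfunction2} for the Lipschitz part of (A3), and $Z_n$ proportional to $|\lambda_n-\lambda|\cdot{\bf 1}_{A_n}$ with an $O(n^{-1/2})$ bound on its $p$-th moments (the paper gets this from boundedness of $Z_n$ and Hoeffding, you from the mean value theorem and Lemma \ref{lem:sqrtlemma}; both work). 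The one slip is your choice of $A_{n,2}$: requiring $\bigl|\frac{1}{n}\sum_{i=1}^n {\bf 1}_{\{X_i=0\}}-e^{-\lambda}\bigr|\le \frac{1}{2}e^{-\lambda}$ does not force $\lambda_n\ge\lambda/2$, nor even $A_{n,2}\subset E_n$, when $\lambda$ is small: if $\lambda\le\log(3/2)$ the empirical frequency may equal $1$ on your $A_{n,2}$, giving $\lambda_n=0$ and an unbounded $\lambda_n^{-1}$, and $\lambda-\log(3/2)<\lambda/2$ whenever $\lambda<2\log(3/2)$, so the claimed bound $\lambda_n^{-1}\le 2\lambda^{-1}$ can fail. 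You flag exactly this as the main obstacle, and the fix is what the paper does: define $A_{n,2}:=\{\lambda/2\le\lambda_n\le2\lambda\}$ directly (or shrink your tolerance as a function of $\lambda$ so that it implies this); with that adjustment your verification of (A1)--(A3) is sound and Theorem \ref{thm:convergenceRate} yields the stated rate.
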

\begin{proof}
Write 
\[\lambda_n = -\log(\tilde{X}_n(\infty)) = -\log\left(\frac{1}{n} \sum_{i=1}^n {\bf 1}_{\{X_i = 0\}}\right)\] (being well-defined on $E_n$), and define
\[ A_{n,1} := \left\{ \sup_{-\sqrt{n} \leq y \leq \sqrt{n}}  |  \tilde{X}_n(c + \I y) -  \tilde{X}(c + \I y) | \leq \exp(-2 \lambda)/2 \right\}, \]
\[ A_{n,2} := \left\{ \frac{\lambda }{ 2} \leq \lambda_n \leq 2 \lambda \right\}, \]
and $A_n = A_{n,1} \cap A_{n,2}$. Note that $A_{n,2} \subset E_n$ and thus $A_n \subset E_n$. We show that assumptions (A1)--(A3) are valid. Because we explicitly assumed (A2), we are left with verifying   (A1) and (A3).
These verification resemble those of the M/G/1 example.

$\rhd$ {Assumption (A1).}
$\prob{A_{n,1}^{\mathfrak{c}}} = O( \sqrt{n} \exp(-n \beta^2 / 18)) = O(n^{-1/2})$ follows from Lemma \ref{lem:duvroye}, with $\beta = \exp(-2 \lambda)/2$, together with Chebyshev's Inequality and the assumption $\E[|X|^2] < \infty$.
$\prob{A_{n,2}^{\mathfrak{c}}} = O(n^{-1/2})$ follows from Hoeffding's Inequality, and thus
\[\prob{A_n^{\mathfrak{c}}} \leq \prob{A_{n,1}^{\mathfrak{c}}} + \prob{A_{n,2}^{\mathfrak{c}}} = O(n^{-1/2}).\]

$\rhd$ {Assumption (A3).} On $A_n$, for $s = c + \I y$, $-\sqrt{n} \leq y \leq \sqrt{n}$, we have
\[ \left|\,\frac{\tilde{X}_n(s) }{ \tilde{X}(s)} - 1\,\right| \leq \left|\tilde{X}_n(s) - \tilde{X}(s)\right|\, e^{2 \lambda} \leq \frac{1}{2},\]
where $|\tilde{X}(s)|^{-1} \leq \exp(2 \lambda)$ follows as in \eqref{eq:inftildeX}, and thus
\begin{eqnarray*}
\left|\Log(\tilde{X}_n(s))  - \Log(\tilde{X}(s)) \right|
&=& \left|\Log(\tilde{X}_n(s) / \tilde{X}(s)) \right|
= \left|L(\tilde{X}_n(s) / \tilde{X}(s)) \right| \\
&\leq&  \left|\tilde{X}_n(s) - \tilde{X}(s)\right| \,(\log 4) \,e^{2 \lambda},
\end{eqnarray*}
using \eqref{eq:Lfunction} and \eqref{eq:Lfunction2}.
This implies 
\begin{eqnarray*}
\lefteqn{\left| (\Psi \tilde{X}_n)(s) - (\Psi \tilde{X})(s) \right| \cdot {\bf 1}_{A_n} }\\
&\leq&
\left| \lambda_n^{-1} \Log(\tilde{X}_n(s)) - \lambda_n^{-1}  \Log(\tilde{X}(s))\right| \cdot {\bf 1}_{A_n}
+ \left| \lambda_n^{-1} \Log(\tilde{X}(s))   - \lambda^{-1}    \Log(\tilde{X}(s))\right| \cdot {\bf 1}_{A_n} \\
&\leq& \left|\lambda_n^{-1}\right| \cdot \left|\Log(\tilde{X}_n(s)) - \Log(\tilde{X}(s))\right| \cdot {\bf 1}_{A_n} +
\left| \lambda_n^{-1} - \lambda^{-1}\right| \cdot \left|\Log(\tilde{X}(s))\right| \cdot {\bf 1}_{A_n} \\
&\leq& \frac{2 \,\log 4}{\lambda} \,e^{2 \lambda} \cdot\left|\tilde{X}_n(s) - \tilde{X}(s)\right| \cdot {\bf 1}_{A_n} +
Z_n \:\text{ a.s.,}
\end{eqnarray*}
with $Z_n = 2 \lambda^{-2} |\lambda_n - \lambda| \cdot {\bf 1}_{A_n}$. By definition of $A_{n,2}$, $Z_n$ is bounded, and it follows from Hoeffding's inequality that there is a $\kappa_3 > 0$ independent of $n$ such that, for all $1 < p < 2$, $\E[|Z_n|^p] \leq \kappa_3 n^{-1/2}$.  This shows that (A3) is valid.
\end{proof}

\begin{remark} \label{rem:otherdistributions}
The decompounding example above can also be carried out with distributions other than Poisson. 
For example, if $N$ is Bin$(M, p)$ distributed, for known $M \in \mathbb{N}$ and unknown $p \in (0,1)$, then
$\tilde{X}(s) = ( p \tilde{Y}(s) + 1-p)^M$, $\tilde{X}(\infty) = (1-p)^M$, and thus
\[\tilde{Y}(s) = (\Psi \tilde{X})(s) := \frac{\tilde{X}(s)^{1/M} - \tilde{X}(\infty)^{1/M}}{1 - \tilde{X}(\infty)^{1/M}}.\]
Or, if $N$ is negative binomially distributed, i.e.
\[ \prob{N=n} = \left( \begin{array}{cc} n + M - 1 \\ n \end{array} \right) (1-p)^M p^n, \quad (n = 0,1,2,\ldots), \]
for some known $M \in \N$ and unknown $p \in (0,1)$, then 
$\tilde{X}(s) = (1-p)^M (1 - p \tilde{Y}(s))^{-M}$, 
$\tilde{X}(\infty) = (1-p)^M$, and thus
\[ \tilde{Y}(s) = (\Psi \tilde{X})(s) := \frac{1 - \tilde{X}(\infty)^{1/M} \tilde{X}(s)^{-1/M}}{1 - \tilde{X}(\infty)^{1/M}}. \]
For both examples it is not difficult to construct $A_n$ and $Z_n$, in the same spirit as in the proof of Theorem \ref{thm:decompoundingrates}, such that the convergence rates $\mathbb{E}[|F_n^Y(w) - F^Y(w)|] =  O(n^{-1/2} \log(n+1))$ hold. The key requirement on $N$ to obtain these rates is that the relation $\tilde{X}(s) = \mathbb{E}[ \tilde{Y}(s)^N]$ can be inverted, such that we can write $\tilde{Y}(s) = (\Psi \tilde{X})(s)$ for some mapping $\Psi$.
\end{remark}

\section{Auxiliary lemmas} \label{sec:aux}
This section contains a number of auxiliary lemmas that are used in the proofs of Theorems \ref{thm:convergenceRate}, \ref{thm:mg1rates}, and~\ref{thm:decompoundingrates}.

\begin{lemma} \label{lem:sqrtlemma}
Let $c > 0$, $n \in \N$ and let $X_1, \ldots, X_n$ be i.i.d.\ nonnegative random variables distributed as $X$. For all $p \in (1,2)$ and $s \in c + \I \R$,
\begin{eqnarray*}
 \E\left[ \left| \tilde{X}(s) - \frac{1}{n} \sum_{i=1}^n \exp(-s X_i) \right|^p \right] \leq 2^p n^{-1/2},
 \end{eqnarray*}
 and
\begin{eqnarray*}
 \E\left[ \left| \E[X] - \frac{1}{n} \sum_{i=1}^n X_i \right|^p \right] \leq  (1 + \Var[X]) n^{-1/2},
 \end{eqnarray*}
  where the last inequality is only informative if \,$\Var[X]<\infty$.
 \end{lemma}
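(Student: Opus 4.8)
The plan is to prove each of the two moment bounds separately, in both cases reducing the $p$-th moment for $p\in(1,2)$ to a second moment via Jensen's (or H\"older's) inequality, exploiting the i.i.d.\ structure to introduce the factor $n^{-1}$, and then interpolating to the exponent $n^{-1/2}$.

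\textbf{First bound.} Write $U_i := \exp(-sX_i)$, so that $U_1,\dots,U_n$ are i.i.d.\ complex-valued random variables with $\E[U_i] = \tilde X(s)$ and $|U_i|\le 1$ a.s.\ (since $\Re(s)=c>0$ and $X_i\ge 0$). Set $S_n := \frac1n\sum_{i=1}^n (U_i - \tilde X(s))$, which is exactly the quantity inside the absolute value. Since $p<2$, Jensen's inequality applied to the concave function $x\mapsto x^{p/2}$ on $[0,\infty)$ gives
\[
\E\bigl[|S_n|^p\bigr] = \E\bigl[(|S_n|^2)^{p/2}\bigr] \le \bigl(\E[|S_n|^2]\bigr)^{p/2}.
\]
Now $\E[|S_n|^2] = n^{-2}\sum_{i=1}^n \E\bigl[|U_i - \tilde X(s)|^2\bigr] = n^{-1}\,\Var\text{-type term}$, and since $|U_i|\le 1$ and $|\tilde X(s)|\le 1$ we have $|U_i - \tilde X(s)|\le 2$, so $\E[|U_i-\tilde X(s)|^2]\le 4$. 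Hence $\E[|S_n|^2]\le 4n^{-1}$ and $\E[|S_n|^p]\le (4n^{-1})^{p/2} = 2^p n^{-p/2}$. Finally, since $p>1$ we have $n^{-p/2}\le n^{-1/2}$, giving the claimed bound $2^p n^{-1/2}$. (The orthogonality step — cross terms $\E[(U_i-\tilde X(s))\overline{(U_j-\tilde X(s))}] = 0$ for $i\ne j$ by independence and centering — is the only thing to state carefully, and it is routine.)

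\textbf{Second bound.} This is the same argument with $U_i$ replaced by the real random variable $X_i$ and $\tilde X(s)$ by $\E[X]$; here $T_n := \frac1n\sum_{i=1}^n(X_i - \E[X])$ and the second-moment computation gives $\E[T_n^2] = n^{-1}\Var[X]$ exactly. Jensen again yields $\E[|T_n|^p]\le (\Var[X]/n)^{p/2}$. To package this as $(1+\Var[X])n^{-1/2}$ for all $p\in(1,2)$ and all $n$, split into cases: if $\Var[X]\le 1$ then $(\Var[X]/n)^{p/2}\le (\Var[X])^{p/2} n^{-1/2}\le n^{-1/2}$; if $\Var[X]>1$ then $(\Var[X]/n)^{p/2} = (\Var[X])^{p/2} n^{-p/2}\le \Var[X]\cdot n^{-1/2}$ using $p/2<1$ and $n^{-p/2}\le n^{-1/2}$. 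In either case the bound is at most $(1+\Var[X])n^{-1/2}$. If $\Var[X]=\infty$ the inequality is vacuous, as the statement already notes.

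\textbf{Main obstacle.} There is no real obstacle here; the only mild subtlety is being careful that the crude uniform bound $|U_i-\tilde X(s)|\le 2$ (rather than the sharper variance) is what one uses in the first bound, since $X$ need not have finite variance for that part, and that the interpolation $n^{-p/2}\le n^{-1/2}$ (valid precisely because $p\ge 1$) is invoked consistently. One should also double-check that Jensen's inequality is being applied in the correct direction — $x\mapsto x^{p/2}$ is concave for $p\in(1,2)$, so $\E[X^{p/2}]\le (\E X)^{p/2}$ — which is what makes the reduction to the second moment legitimate.
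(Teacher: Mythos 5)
Your proof is correct and follows essentially the same route as the paper: reduce the $p$-th moment to a second moment via Jensen's inequality, use independence of the centered i.i.d.\ terms to get the $n^{-1}$ factor, and weaken $n^{-p/2}$ to $n^{-1/2}$ (and $\Var[X]^{p/2}$ to $1+\Var[X]$). The only cosmetic difference is in the first bound, where the paper peels off $|{\cdot}|^{p-1}\le 2^{p-1}$ almost surely and applies Cauchy--Schwarz to the remaining first power, whereas you apply Jensen directly to $(|{\cdot}|^2)^{p/2}$, which in fact yields the slightly sharper intermediate bound $2^p n^{-p/2}$.
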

\begin{proof}
Let $s \in c + i \R$. Since $X_i \geq 0$ a.s.\ for all $i=1, \ldots, n$, we have
\begin{align*}
\left| \tilde{X}(s) - \frac{1}{n} \sum_{i=1}^n \exp(-s X_i) \right|^{p-1} \leq
\left( |\tilde{X}(s)| + \frac{1}{n} \sum_{i=1}^n \left| \exp(-s X_i) \right|  \right)^{p-1} \leq 2^{p-1} \text{ a.s.}
\end{align*}
Jensen's Inequality then implies
\begin{align*}
\E\left[ \left| \tilde{X}(s) - \frac{1}{n} \sum_{i=1}^n \exp(-s X_i) \right|^p \right]
&\leq 2^{p-1} \sqrt{ \E\left[ \left|  \frac{1}{n} \sum_{i=1}^n (\exp(-s X_i) - \E[\exp(-s X)]) \right|^2 \right] } \\
&= 2^{p-1} \sqrt{ \frac{1}{n} \E[\left| \exp(-s X) - \E[\exp(-s X)] \right|^2]}
\leq 2^p n^{-1/2}.
\end{align*}
Furthermore, we have, again by Jensen's Inequality,
\begin{eqnarray*}
\lefteqn{ \E\left[ \left| \E[X] - \frac{1}{n} \sum_{i=1}^n X_i \right|^p \right]
\leq n^{-p} \E\left[ \left| \sum_{i=1}^n (X_i - \E[X]) \right|^2 \right]^{p/2}}\\
&\leq& n^{-p} n^{p/2} \E[(X - \E[X])^2]^{p/2} \leq n^{-1/2}  \Var[X]^{p/2}
\leq n^{-1/2} (1 + \Var[X]).
\end{eqnarray*}
This proves the claims.
\end{proof}

\begin{lemma} \label{lem:duvroye}
Let $n \in \N$, and let $X_1, \ldots, X_n$ be i.i.d.\ nonnegative random variables distributed as $X$.
Let $\alpha > 0$, $\beta > 0$, $c > 0$, and \[\tilde{X}_n(s) := \frac{1}{n} \sum_{i=1}^n \exp(-s X_i),\] for $s \in \C_{+}$.
Then
\begin{eqnarray*}
\prob{ \sup_{|t| \leq \alpha} | \tilde{X}(c + \I t) - \tilde{X}_n(c + \I t)| > \beta} &<& 4 \left(1 + \frac{8 \alpha \E[|X|]}{\beta} \right) \exp(-n \beta^2 / 18) \\
&& +\: \prob{ \left| \frac{1}{n} \sum_{i=1}^n X_i \right| \geq \frac{4}{3} \E[|X|]}.
\end{eqnarray*}
\end{lemma}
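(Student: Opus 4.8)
\textbf{Proof plan for Lemma \ref{lem:duvroye}.}

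The plan is to bound the supremum of the random function $t \mapsto \tilde X(c+\I t)-\tilde X_n(c+\I t)$ on the compact interval $[-\alpha,\alpha]$ by combining a pointwise concentration bound with a modulus-of-continuity (Lipschitz) argument, in the spirit of standard empirical-process covering-number estimates (as used, e.g., by Devroye). First I would note that $g_n(t):=\tilde X_n(c+\I t)-\tilde X(c+\I t)$ has expectation zero for each fixed $t$, is bounded (since $|\exp(-(c+\I t)X_i)|\le 1$ and $|\tilde X(c+\I t)|\le 1$, so $|g_n(t)|\le 2$), and is an average of i.i.d.\ centered terms. Splitting real and imaginary parts and applying Hoeffding's inequality to each gives a pointwise tail bound of the form $\prob{|g_n(t)|>\beta}\le 4\exp(-n\beta^2/c')$ for a suitable constant $c'$; the constant $18$ in the statement is what comes out after distributing $\beta$ across the real/imaginary split and the factor-of-two in the range of each coordinate.

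Next I would discretize: cover $[-\alpha,\alpha]$ by a finite grid $t_1,\dots,t_M$ with spacing chosen so that $M \le 1 + (\text{something})\cdot\alpha\E[|X|]/\beta$, matching the prefactor $1+8\alpha\E[|X|]/\beta$ in the statement. On the grid points, a union bound over the $M$ pointwise Hoeffding estimates yields $\prob{\max_{j}|g_n(t_j)|>\beta/2} \le 4M\exp(-n\beta^2/c'')$. For the interpolation between grid points, I would control the (random) Lipschitz constant of $g_n$: since $\frac{\DD}{\DD t}\exp(-(c+\I t)X_i) = -\I X_i \exp(-(c+\I t)X_i)$ has modulus $\le X_i e^{-ct}\le X_i$, the function $\tilde X_n(c+\I\cdot)$ is Lipschitz with constant $\frac1n\sum_{i=1}^n X_i$, and likewise $\tilde X(c+\I\cdot)$ is Lipschitz with constant $\E[|X|]$. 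Hence $g_n$ is Lipschitz with constant $\frac1n\sum_i X_i + \E[|X|]$, which on the event $\{\frac1n\sum_i X_i < \tfrac43\E[|X|]\}$ is at most $\tfrac73\E[|X|]$ (a crude bound of order $\E[|X|]$ suffices). Choosing the grid spacing to be a small constant multiple of $\beta/\E[|X|]$ then guarantees that on that event $\sup_{|t|\le\alpha}|g_n(t)| \le \max_j |g_n(t_j)| + \beta/2$, so $\{\sup|g_n|>\beta\}$ is contained in $\{\max_j|g_n(t_j)|>\beta/2\}\cup\{\frac1n\sum_i X_i\ge\tfrac43\E[|X|]\}$. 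Combining the union bound with the probability of the second event gives exactly the claimed inequality.

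I expect the main obstacle to be purely one of bookkeeping the constants: one must choose the grid spacing, the split of $\beta$ between the discretization error and the Hoeffding exponent, and the coordinatewise application of Hoeffding so that everything collapses to the stated prefactor $4(1+8\alpha\E[|X|]/\beta)$ and exponent $n\beta^2/18$. There is no conceptual difficulty — the only subtlety is that the Lipschitz constant of $g_n$ is itself random, which is why the event $\{\frac1n\sum_i X_i\ge\tfrac43\E[|X|]\}$ must be peeled off and appears as the extra additive term; everything else is a routine chaining-by-discretization estimate. A minor technical point worth stating carefully is that $t\mapsto \tilde X(c+\I t)$ is genuinely differentiable with the derivative bound above, which needs $\E[|X|]<\infty$ (implicit in the appearance of $\E[|X|]$ in the statement; if $\E[|X|]=\infty$ the bound is vacuous).
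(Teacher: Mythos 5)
Your plan is essentially the paper's own proof: the paper likewise establishes the modulus-of-continuity bounds $|\tilde{X}(c+\I t)-\tilde{X}(c+\I s)|\le \E[|1-\exp(\I (t-s)X)|]\le |t-s|\,\E[|X|]$ and $|\tilde{X}_n(c+\I t)-\tilde{X}_n(c+\I s)|\le |t-s|\,n^{-1}\sum_i X_i$, applies Hoeffding separately to real and imaginary parts at grid points, and then runs exactly your discretization argument (following Devroye's Theorem~1), peeling off the event $\{n^{-1}\sum_i X_i\ge \tfrac43\E[|X|]\}$ to control the random Lipschitz constant. The only cosmetic differences are bookkeeping ones you already flag (the paper splits $\beta$ into thirds rather than halves, which is what produces the constants $8\alpha\E[|X|]/\beta$ and $n\beta^2/18$), plus a harmless typo in your derivative bound ($X_i e^{-cX_i}$, not $X_i e^{-ct}$).
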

\begin{proof}
One can show that, for all $t,s \in [-\alpha, \alpha]$,
\[ |\tilde{X}(c + \I t) - \tilde{X}(c + \I s)| \leq \E\left[|1 - \exp(\I (t-s) X)|\right],\]
and
\[ |\tilde{X}_n(c + \I t) - \tilde{X}_n(c + \I s)| 
\leq |t-s| \left| \frac{1}{n} \sum_{i=1}^n X_i \right|,\]
whereas, for each $t_i \in [-\alpha, \alpha]$,
\begin{eqnarray*}
\lefteqn{\prob{ |\tilde{X}(c + \I t_i) - \tilde{X}(c + \I t_i)| > \frac{1}{3} \beta}} \\
&\leq &\prob{ | \Re\big(\tilde{X}(c + \I t_i) - \tilde{X}(c + \I t_i)\big)| > \frac{1}{6} \beta} +
\prob{ | \Im\big(\tilde{X}(c + \I t_i) - \tilde{X}(c + \I t_i)\big)| > \frac{1}{6} \beta}\\
&\leq &4 \exp(-2 n \beta^2 / 36),
\end{eqnarray*}
using Hoeffding's inequality. 
The  claim then follows along precisely the same lines as the proof of \citep[Theorem 1]{Devroye1994}.
\end{proof}

\begin{lemma} \label{lem:ntaillemma}
Let $w > 0$, $c > 0$, and let $f:[0, \infty) \ra [0,1]$ be a continuously differentiable function, twice differentiable in the point $w$, and such that $\int_w^{\infty} |f'(y+w)| e^{-cy} y^{-1} \,\dy < \infty$. 
There exists a $\kappa_4 > 0$ such that, for all $m > 0$,
\[ \left| \int_{|y|>m} \frac{1}{2 \pi} e^{(c+\I y) w} \bar{f}(c+\I y) \DD y
\right|  \leq \frac{\kappa_4}{m}. \]
\end{lemma}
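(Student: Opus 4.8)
The plan is to relate the tail integral of $e^{(c+\I y)w}\bar f(c+\I y)$ to the tail behaviour of the Fourier transform of an auxiliary function built from $f$, and then to invoke integration by parts to gain a factor $1/m$. Writing $s=c+\I y$, observe that $\bar f(s)=\int_0^\infty f(x)e^{-sx}\,\dx$, so $e^{sw}\bar f(s)=\int_0^\infty f(x)e^{-s(x-w)}\,\dx=\int_{-w}^\infty f(u+w)e^{-su}\,\DD u$. The portion over $u\in[-w,0)$ contributes $\int_{-w}^0 f(u+w)e^{-cu}e^{-\I yu}\,\DD u$, which is (up to the bounded weight $e^{-cu}$ and the smoothness of $f$) the Fourier transform of a $C^1$ function supported on a compact interval; an integration by parts in $u$ shows this decays like $1/y$, so its contribution to $\int_{|y|>m}$ is $O(1/m)$. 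The portion over $u\in[0,\infty)$ gives $\int_0^\infty f(u+w)e^{-cu}e^{-\I yu}\,\DD u$, and here the hypothesis $\int_w^\infty |f'(y+w)|e^{-cy}y^{-1}\,\dy<\infty$ is exactly what is needed: integrating by parts, $\int_0^\infty f(u+w)e^{-cu}e^{-\I yu}\,\DD u = \frac{f(w)}{c+\I y} + \frac{1}{c+\I y}\int_0^\infty \big(f'(u+w)-c f(u+w)\big)e^{-cu}e^{-\I yu}\,\DD u$, and the boundary term already decays like $1/y$.

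First I would make the substitution above precise and split $e^{sw}\bar f(s)$ into the ``compact part'' $g_1(y):=\int_{-w}^0 f(u+w)e^{-cu}e^{-\I yu}\,\DD u$ and the ``tail part'' $g_2(y):=\int_0^\infty f(u+w)e^{-cu}e^{-\I yu}\,\DD u$. Second, for $g_1$ I would integrate by parts once; the boundary terms at $u=-w$ and $u=0$ and the integral of the derivative are all bounded uniformly, giving $|g_1(y)|\le K_1/|y|$ for $|y|\ge 1$, hence $\big|\int_{|y|>m} e^{\I yw}g_1(y)\,\dy\big|\le 2K_1/m$ for $m\ge 1$ (and the case $m<1$ is handled trivially by crudely bounding the integrand, since $m^{-1}>1$ and $\bar f$ is bounded on $c+\I\R$). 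Third, for $g_2$ I would integrate by parts to produce $g_2(y)=\frac{f(w)}{c+\I y}+\frac{1}{c+\I y}\,h(y)$ where $h(y):=\int_0^\infty(f'(u+w)-cf(u+w))e^{-cu}e^{-\I yu}\,\DD u$; the first term is clearly $O(1/|y|)$, and $|h(y)|\le \int_0^\infty |f'(u+w)|e^{-cu}\,\DD u + c\int_0^\infty f(u+w)e^{-cu}\,\DD u<\infty$ is bounded (using $0\le f\le 1$ and the integrability hypothesis, noting $\int_0^\infty|f'(u+w)|e^{-cu}\DD u\le \int_0^\infty|f'(u+w)|e^{-cu}u^{-1}\DD u$ is not quite right near $u=0$, so I would instead split at $u=1$: on $[1,\infty)$ use the hypothesis, on $[0,1]$ use continuity of $f'$). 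Hence $|g_2(y)|\le K_2/|y|$ for $|y|\ge 1$ as well, and the same tail estimate applies. Fourth, combining, $\big|\int_{|y|>m}\frac{1}{2\pi}e^{(c+\I y)w}\bar f(c+\I y)\,\dy\big|\le \frac{e^{cw}}{2\pi}\cdot\frac{2(K_1+K_2)}{m}$ for $m\ge 1$, and for $0<m<1$ we bound by $\frac{e^{cw}}{2\pi}\cdot 2\cdot\sup_y|g_1(y)+g_2(y)|\cdot 1\le \frac{e^{cw}}{\pi}\sup_y|\bar f(c+\I y)| < \kappa_4/m$ since $1/m>1$; taking $\kappa_4$ to be the larger of the two constants finishes the proof.

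The main obstacle I anticipate is the behaviour near $u=0$ in the integration by parts for $g_2$: the weight $u^{-1}$ in the hypothesis blows up there, so the hypothesis alone does not bound $\int_0^\infty|f'(u+w)|e^{-cu}\,\DD u$ — one must separate a neighbourhood of $u=0$ and use the genuine continuity (hence local boundedness) of $f'$ guaranteed by continuous differentiability of $f$, while the $u^{-1}$ weight is harmless away from zero and in fact only helps. A secondary technical point is justifying the integration by parts itself, i.e. that the boundary term at $u=\infty$ vanishes; this follows because $f(u+w)e^{-cu}\to 0$ as $u\to\infty$ (indeed $f$ is bounded and $e^{-cu}\to0$). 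The role of the twice-differentiability of $f$ at $w$ is, as the authors note elsewhere, not actually needed for this lemma as stated — it was only invoked to make the hypothesis verifiable — so I would not use it here. Everything else is routine manipulation of Laplace/Fourier integrals.
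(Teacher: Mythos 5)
Your argument breaks down at the decisive step. You establish (at best) that the integrand $e^{(c+\I y)w}\bar f(c+\I y)$ is $O(1/|y|)$, and then conclude $\bigl|\int_{|y|>m}\cdots\,\dy\bigr|\le \text{const}/m$. But $\int_{|y|>m}|y|^{-1}\,\dy=\infty$: a pointwise $1/|y|$ bound gives no finite bound at all on the tail integral, let alone one of order $1/m$. The tail integral converges only conditionally, through the oscillation of $e^{\I y w}\bar f(c+\I y)$, and your plan throws that cancellation away. To make a "direct" argument work you would need $O(1/y^2)$ decay (a second integration by parts, requiring hypotheses on $f''$ that you do not have globally) or an explicitly oscillatory estimate. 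There is a second gap: your bound on $h(y)$ (and hence on $g_2$) needs $\int_0^\infty |f'(u+w)|e^{-cu}\,\DD u<\infty$, which does not follow from the stated hypothesis. On $[1,\infty)$ the weight satisfies $u^{-1}\le 1$, so the hypothesized integral $\int |f'(u+w)|e^{-cu}u^{-1}\,\DD u$ is the \emph{smaller} one; your proposed split at $u=1$ uses the inequality in the wrong direction, and continuity of $f'$ only helps near $0$, not in the tail.

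The paper's proof avoids both problems by never bounding the tail integrand pointwise: it computes $\int_{|y|\le m}$ via Fubini, obtaining $\int_{-w}^\infty \pi^{-1} f(y+w)e^{-cy}\,\frac{\sin(my)}{y}\,\dy$, and uses the inversion identity (valid since $f$ is $C^1$) to write the tail as $f(w)$ minus this quantity. The Dirichlet kernel supplies the factor $1/y$ naturally, so the integration by parts on $[w,\infty)$ produces exactly the weighted integral $\int_w^\infty |f'(y+w)|e^{-cy}y^{-1}\,\dy$ assumed finite — this is why the hypothesis carries that weight — while on $[-w,w]$ one writes $\phi(y)=f(y+w)e^{-cy}=\phi(0)+\phi'(0)y+g(y)y$ and integrates $g$ by parts; here the existence of $f''(w)$ is genuinely used to make $g$ continuously differentiable at $0$. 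So your closing remark that twice differentiability at $w$ is not needed is also unsupported: the paper's proof relies on it, and your alternative does not go through as written.
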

\begin{proof}
Fix $m > 0$. Observe that
\begin{eqnarray} \nonumber
\lefteqn{ \int_{|y|\leq m} \frac{1}{2 \pi} e^{(c+\I y) w} \bar{f}(c+\I y) \dy
=   \int_{|y|\leq m} \frac{1}{2 \pi} e^{(c+\I y) w} \int_0^{\infty} e^{-(c+\I y) x} f(x) \dx\,\dy } \\ \nonumber
&=&   \int_0^{\infty} \frac{1}{2 \pi} f(x) \int_{|y|\leq m} e^{(c+\I y) (w-x)} \dy\,\dx
=   \int_0^{\infty} \frac{1}{\pi} f(x) e^{c (w-x)} \frac{\sin(m (w-x))}{w-x} \dx  \\
&=&  \int_{-w}^{\infty} \frac{1}{\pi} f(y + w) e^{-c y} \frac{\sin(m y)}{y} \dy,  \label{eq:mstepeq1}
\end{eqnarray}
using Fubini's Theorem and the variable substitution $y: = x-w$, together with the obvious identity $\sin(-m y) /( -y) = \sin(m y)/y$.

We consider the integral \eqref{eq:mstepeq1} separately over the domain $[w, \infty)$ and $[-w,w]$.
For the interval $[w, \infty)$, we have
\begin{eqnarray}  \nonumber
\lefteqn{ \left|\int_w^{\infty} \frac{1}{\pi} \frac{f(y + w) e^{-c y}}{y} \sin(m y) \dy \right|}\\
&\leq&\left| \left[  \frac{1}{\pi} \frac{f(y + w) e^{-c y}}{y} \frac{\cos(m y)}{-m} \right]_{y=w}^{\infty} \right|
+ \left|\int_w^{\infty} \frac{1}{\pi} \pd{}{y} \left[ \frac{f(y + w) e^{-c y}}{y} \right] \frac{\cos(m y)}{m} \dy \right|\nonumber  \\ \nonumber
&\leq & \left| \frac{1}{\pi} \frac{f(w + w) e^{-c w}}{w} \frac{\cos(m w)}{m} \right|\nonumber \\
&&+\: \int_w^{\infty} \frac{1}{\pi} |f'(y+w)| e^{-cy} y^{-1} \frac{1}{m} \dy
+ \int_w^{\infty} \frac{1}{\pi} f(y+w) (c e^{-cy} y^{-1} + e^{-c y } y^{-2}) \frac{1}{m} \dy\nonumber \\ \label{eq:prooflemmaeq0}
&\leq & \frac{1}{m} \cdot \left(\frac{e^{-c w}}{w \pi} +
\frac{1}{\pi} \int_w^{\infty} |f'(y+w)| e^{-cy} y^{-1} \dy  + \frac{e^{-c w}}{\pi w} + \frac{e^{-c w}}{\pi c w^2} \right).
\end{eqnarray}
We now consider the integral \eqref{eq:mstepeq1} on the interval $[-w, w]$.
Write $\phi(y) := f(y + w) e^{-c y}$ and $g(y) := (\phi(y) - \phi(0) - \phi'(0) y)/y$, and observe that $g$ is continuously differentiable on the interval $[-w, w]$ (which follows from the fact that $f''(w)$ exists). We have
\begin{eqnarray} \nonumber
\lefteqn{\left|\phi(0) - \int_{-w}^{w} \frac{1}{\pi} f(y + w) e^{-c y} \frac{\sin(m y)}{y} \dy \right|} \\ \nonumber
&=& \left|\phi(0) - \int_{-w}^{w} \frac{1}{\pi} \Big(\phi(0) + \phi'(0) y + g(y) y \Big) \frac{\sin(m y)}{y} \dy \right| \\ \label{eq:prooflemmaeq1}
&\leq & \phi(0) \left|1 - \int_{-w}^{w} \frac{1}{\pi} \frac{\sin(m y)}{y} \dy \right|
+ \left|\int_{-w}^{w} \frac{1}{\pi} g(y) \sin(m y) \dy \right|;
\end{eqnarray}
realize that $\int_{-w}^w \pi^{-1} \phi'(0) \sin(my) {\rm d}y =0$.

We first bound the first term of \eqref{eq:prooflemmaeq1}.
\begin{align*}
\left|1 - \int_{-w}^{w} \frac{\sin(m y)}{\pi y} \dy \right|
\leq \left|1 - \int_{-\infty}^{\infty}  \frac{\sin(m y)}{\pi y} \dy \right|
+ \left|  \int_{w}^{\infty}   \frac{2\sin(m y)}{\pi y} \dy \right|
=  \left| \int_{w}^{\infty} \frac{2\sin(m y)}{\pi y} \dy \right|.
\end{align*}
Write $h(a) := \int_w^{\infty} e^{-a y} \,y^{-1}\,{\sin(m y)} \dy$, $a \geq 0$.
Then $\lim_{a \ra \infty} h(a) = 0$,
\begin{eqnarray*}
h'(a) &=& \int_w^{\infty} -e^{-a y} \sin(m y) \dy \\&=& -e^{-a w} \int_0^{\infty} e^{-a x} \sin(m(x+w)) \dx
= -e^{-a w} \frac{m \cos(w m) + a \sin(w m)}{a^2+m^2},
\end{eqnarray*}
and thus
\begin{eqnarray*}
\left| \int_{w}^{\infty} \frac{\sin(m y)}{y} \dy \right| & = &|h(0) |  = \left| \lim_{a \ra \infty} h(a) - \int_0^{\infty} h'(a) {\rm d}a \right|\\
&=& \left| \int_0^{\infty} e^{-a w} \frac{m \cos(w m) + a \sin(w m)}{a^2+m^2}{\rm d}a \right| \\
&\leq&  \int_0^{\infty} e^{-a w} \frac{m + a}{a^2+m^2} {\rm d}a
\leq  \frac{2}{m} \int_0^{\infty} e^{-a w}  {\rm d}a = \frac{2}{m w},
\end{eqnarray*}
which implies
\begin{align} \label{eq:prooflemmaeq2}
\left|1 - \int_{-w}^{w}  \frac{\sin(m y)}{\pi y} \dy \right|
\leq \frac{4}{w \pi m}.
\end{align}

The second term of \eqref{eq:prooflemmaeq1} is bounded by
\begin{eqnarray} \nonumber
 \lefteqn{\left|\int_{-w}^{w} \frac{1}{\pi} g(y) \sin(m y) \dy \right|} \\ \nonumber
&=&
\left|\frac{1}{\pi} g(w) \frac{\cos(-m w)}{m} - \frac{1}{\pi} g(-w) \frac{\cos(m w)}{m}
- \int_{-w}^{w} \frac{1}{\pi} g'(y) \frac{\cos(-m y)}{m} \dy \right| \\ \label{eq:prooflemmaeq3}
&\leq& \frac{|g(w) - g(-w)|}{\pi m} + \frac{1}{\pi m} \int_{-w}^w |g'(y)| \dy.
\end{eqnarray}

Combining \eqref{eq:mstepeq1}, \eqref{eq:prooflemmaeq0}, \eqref{eq:prooflemmaeq1}, \eqref{eq:prooflemmaeq2} and \eqref{eq:prooflemmaeq3}, using $f(w) = \phi(0)$, it follows that
\begin{eqnarray*}
\lefteqn{\left| \int_{|y| > m} \frac{1}{2 \pi} e^{(c+\I y) w} \overline{f}(c+\I y) \dy \right|}\\
&=&\left| f(w) - \int_{|y|\leq m} \frac{1}{2 \pi} e^{(c+\I y) w} \overline{f}(c+\I y) \dy \right|\\
&=&  \left| f(w) - \int_{-w}^{w} \frac{1}{\pi} f(y + w) e^{-c y} \frac{\sin(m y)}{y} \dy
- \int_{w}^{\infty} \frac{1}{\pi} f(y + w) e^{-c y} \frac{\sin(m y)}{y} \dy \right| \\
&\leq & f(w)   \frac{4}{\pi m w} +
\frac{|g(w) - g(-w)|}{\pi m} + \frac{1}{\pi m} \int_{-w}^w |g'(y)| \dy \\
 &&+\,\frac{1}{m} \cdot \left(\frac{e^{-c w}}{w \pi} +
\frac{1}{\pi} \int_w^{\infty} |f'(y+w)| e^{-cy}{y} \dy  + \frac{e^{-c w}}{\pi w} + \frac{e^{-c w}}{\pi c w^2} \right).
\end{eqnarray*}
Defining
\begin{eqnarray*}
\kappa_4 &:=& f(w)   \frac{4}{\pi w} +
\frac{|g(w) - g(-w)|}{\pi} + \frac{1}{\pi} \int_{-w}^w |g'(y)| \dy \\
& &+\: \frac{e^{-c w}}{w \pi} +
\frac{1}{\pi} \int_w^{\infty} |f'(y+w)| e^{-cy}{y} \dy  + \frac{e^{-c w}}{\pi w} + \frac{e^{-c w}}{\pi c w^2},
\end{eqnarray*}
this implies the stated of the lemma.
\end{proof}

\section{Numerical illustration} \label{sec:num}
We provide a numerical illustration of the performance of our estimator, inspired by an application of estimating high-load probabilities in communication links \cite{denBoerMandjesNunezZuraniewski2014}. In particular, we consider an M/G/1 queue in stationarity that serves jobs at unit speed, and whose (unknown) service time distribution is exponential with mean $1/20$. We choose the (unknown) arrival rate $\lambda$ from $\{10, 18, 19\}$; this corresponds to load factors $\rho$ of 0.50, 0.90, and 0.95. 
For $n = 10,000$ consecutive time intervals of length $\delta = 0.10$, the amount of work arriving to the queue in each interval is recorded.
Based on these samples, we estimate the tail probabilities $\prob{Y>w}$ of the workload distribution $Y$ for different values of $w$, using the Laplace-transform based estimator outlined in Section \ref{subsec:mg1}. We test values of $w$ corresponding to the 90th, 99th, and 99.9th percentile of $Y$; the particular values, denoted by $w_{.9}$, $w_{.99}$, and $w_{.999}$, are given in Table \ref{table:w}. 

\begin{table}[!ht]
\begin{center}
\caption{90th, 99th, and 99.9th percentiles of $W$, for different values of $\rho$.}
\label{table:w}
\begin{tabular}{r|rrr}
$\rho$ & $w_{.9}$ & $w_{.99}$ & $w_{.999}$ \\ \hline
0.50 & 0.1609 & 0.3912	& 0.6215 \\
0.90 & 1.0986 & 2.2499  & 3.4012 \\
0.95 & 2.2513	& 4.5539  & 6.8565 
\end{tabular}
\end{center}
\end{table}

For each $\rho \in \{0.50, 0.90, 0.95\}$ and each of the three corresponding values of $w$, we run 1000 simulations and record the relative estimation error
\begin{equation} \label{eq:relativererror}
 \left| \frac{ (1 - F_n^Y(w)) - \prob{Y>w}}{\prob{Y>w}} \right|, 
 \end{equation}
where $F_n^Y(w)$ denotes the outcome of the Laplace-transform based estimator. The simulation average of \eqref{eq:relativererror}, for different values of $\rho$ and $w$, is reported in Table \ref{table:outcomes}, at the lines starting with `Laplace'.

We compare the performance of the Laplace-transform based estimator to that of the empirical estimator that samples the workload $W(i \delta)$ at time points $i \delta$, $i=1,\ldots, n$, and estimates the tail probability $\prob{Y>w}$ by the fraction $n^{-1} \sum_{i=1}^n {\bf 1}_{Y(i \delta) > w}$. The corresponding simulation average of the relative estimation error is reported in Table \ref{table:outcomes}, at the lines starting with `Empirical'.

Table \ref{table:outcomes} shows that the Laplace-transform based estimator has a lower relative error than the empirical estimator, for all-but-one tested instances of $\rho$ and $w$.
This is perhaps not surprising, since the `Laplace' estimator is based on i.i.d.\ samples (of the amount of work arriving to the queue in $\delta$ time units), whereas the `Empirical' estimator is based on correlated samples (of the workload in the queue). 

A third estimator, that is based on the same samples as the `Empirical' estimator, can be constructed as follows: consider the samples of the workload process $Y(i \delta)$, $i=1,\ldots, n$, and let $Q = \{ Y(i \delta) - (Y((i-1) \delta) - \delta) \mid Y((i-1) \delta) \geq \delta, 2 \leq i \leq n\}$. 
If, for some $i$, $Y((i-1) \delta) \geq \delta$, then the amount of work arrived in the $\delta$ time units prior to time point $i \delta$ is precisely equal to
$Y(i \delta) - (Y((i-1) \delta) - \delta)$. (If $Y((i-1) \delta) < \delta$, then the exact amount of work arrived between time points $(i-1)\delta$ and $i \delta$ can not be inferred from the workload samples). If we apply the Laplace-transform based estimator on the samples in the set $Q$ (which are independent samples from the amount of work arriving to the queue in $\delta$ time units), then we obtain an estimate of $\prob{Y>w}$ that is based on the same samples as the `Empirical' estimator. The relative estimation error of this third estimator is reported in Table \ref{table:outcomes}, at the lines starting with `Laplace, censored'.

\begin{table}[ht]
\caption{Average relative estimation error}
\label{table:outcomes} 
\begin{tabular}{lr} \\
$\rho = 0.50$ \, \, \, \, \, &
\begin{tabular}{r|rrr}
Estimator & $w = w_{.9}$ & $w = w_{.99}$ & $w = w_{.999}$ \\ \hline
Laplace	&	0.05 &	0.13	 & 0.25\\
Empirical	&	0.50 &	0.50&	0.67\\
Laplace, censored	&	0.15&	0.39	&0.67
\end{tabular} \\
 & \\
$\rho = 0.90$ \, \, \, \, \, &
\begin{tabular}{r|rrr}
Estimator & $w = w_{.9}$ & $w = w_{.99}$ & $w = w_{.999}$ \\ \hline
Laplace	&	0.19 &		0.40 & 0.65 \\
Empirical	&	0.29  &	0.96 &	1.82\\
Laplace, censored	& 0.23	&	0.49	& 0.81
\end{tabular} \\
& \\
$\rho = 0.95$ \, \, \, \, \, &
\begin{tabular}{r|rrr}
Estimator & $w = w_{.9}$ & $w = w_{.99}$ & $w = w_{.999}$ \\ \hline
Laplace	&	0.39  &  0.96 &   2.09 \\
Empirical	&	 0.52  &  1.36 &   1.83	\\
Laplace, censored	&	0.43 & 1.07 & 2.34
\end{tabular} 
\end{tabular}
\end{table}

Table \ref{table:outcomes} shows that the `Laplace, censored' estimator still outperforms the `Empirical' estimator, in all-but-one instances. Both these estimators are based on the same samples of the workload process. A notable disadvantage of `Empirical' estimator is that it requires the system to reach high load in order to obtain informative estimates. In practice, particularly in the context of operated communication links, this is not desirable: network operators would certainly intervene if the network load reaches exceedingly high levels. These interventions hamper the estimation of the probability that this high load occurs. In contrast, both the Laplace-transform based estimators produce informative estimates of $\prob{Y>w}$, even if all sampled values of the workload process are below $w$.

\section{Discussion, concluding remarks} \label{sec:disc}
In this paper we have discussed a technique to estimate the distribution of a random variable $Y$, focusing on the specific context in which we have i.i.d.\ observations $X_1,\ldots,X_n$, distributed as a random variable $X$, where the relation between the Laplace transforms of $X$ and $Y$ is known. Our problem was motivated from a practical question of an internet service provider, who wished to develop statistically sound techniques to estimate the packet delay distribution based on various types of probe measurements; specific quantiles of the delay distribution are mutually agreed upon by the service provider and its customers, and posted in the service level agreement.
To infer whether these service level agreements are met, the internet provider estimates several tail probabilities of the delay distribution.
This explains why we have focused on the setup presented in our paper, concentrating on estimating the distribution function $F^Y(w)$ and bounding the error ${\mathbb E}[|F_n^Y(w)-F^Y(w)|]$ for this $w$. It is noted that various other papers focus on estimating the density, and often use different convergence metrics; some establish asymptotic Normality.

A salient feature of our analysis is that the ill-posedness of Laplace inversion, i.e., the fact that the inverse Laplace transform operator is not continuous, does not play a r\^ole. Our estimate $F_n^Y(w)$ is `close' to $F^Y(w)$ if the Laplace transform $\bar{F}_n^Y$ is `close' to the Laplace transform $\bar{F}^Y$, measuring `closeness' of these Laplace transforms by the integral \eqref{eq:term3inproof}. Our assumptions (A1)-(A3) ensure that this integral converges to zero (as $n$ grows large), and Section \ref{sec:appl} shows that these conditions are met in practical applications. We therefore do not need regularized inversion techniques as in \cite{MnatsakanovRuymgaartRuymgaart2008} and \cite{Shimizu2010}, with convergence rates of just $1 / \log(n)$. (See further Remark \ref{rem:cdf}).

\vspace{2mm}

{\sc Acknowledgments ---}
{\small
This research is partially funded by SURFnet, Radboudkwartier 273,
3511 CK Utrecht,
The Netherlands. We thank Rudesindo N\'u\~nez-Queija (University of Amsterdam) and Guido Janssen (Eindhoven University of Technology, the Netherlands) for useful discussions and providing literature references. The constructive comments and suggestions of the anonymous referees have improved the paper, and are kindly acknowledged. Part of this work was done while the first author was affiliated with Eindhoven University of Technology and University of Amsterdam.}

{\small
}

\end{document}